\def \NPP/{$\textsc{Nested Polytope Problem}$}
\def \NPPshort/{$\textsc{NPP}$}
\def \EXACTNMF/{\ensuremath{\textsc{EXACT-NMF}}}
\def \NMF/{\ensuremath{\textsc{Non-negative Matrix Factorization}}}
\def \NMFshort/{\ensuremath{\textsc{NMF}}}
\def \RNMF/{\ensuremath{\textsc{RESTRICTED-NMF}}}
\def \IntSimpl/{\ensuremath{\textsc{Intermediate Simplex Problem}}}
\def \ER/{\ensuremath{\exists \mathbb{R}}}
\def \ETR/{\ensuremath{\textsc{ETR}}}
\def \AGP/{\ensuremath{\textsc{Art Gallery Problem}}}
\def \ETRINVA/{\ensuremath{\textsc{ETR-INV-array}}}
\def \ETRINVS/{\ensuremath{\textsc{ETR-INV-system}}}
\def \sol/{\ensuremath{\mbox{sol}}}
\def \NP/{\ensuremath{\mbox{NP}}}
\def \almostequal/{firmly-equal}
\newcommand {\R}{\ensuremath{\mathbb{R}}}
\newcommand {\Q}{\ensuremath{\mathbb{Q}}}
\newcommand {\Z}{\ensuremath{\mathbb{Z}}}
\newcommand {\N}{\ensuremath{\mathbb{N}}}
\renewcommand {\S}{\ensuremath{\mathcal{S}}}
\newcommand {\A}{\ensuremath{\mathcal{A}}}
\newcommand {\rank}{\ensuremath{\textrm{rank}}}
\newcommand {\conv}{\ensuremath{\textrm{conv}}}
\newtheorem{theorem}{Theorem}
\newtheorem*{theorem*}{Theorem}
\newtheorem{lemma}[theorem]{Lemma}
\newtheorem{corollary}[theorem]{Corollary}
\newtheorem{prpty}[theorem]{Property}
\theoremstyle{definition}
\newtheorem{remark}[theorem]{Remark}
\newtheorem*{remark*}{Remark}
\newtheorem{example}[theorem]{Example}
\newtheorem{obs}[theorem]{Observation}
\title{A Universality Theorem for Nested Polytopes}
\author[1]{Michael G. Dobbins}
\author[2]{Andreas Holmsen}
\author[3]{Tillmann Miltzow}
\affil[1]{Binghamton University, USA}
\affil[2]{KAIST, Korea}
\affil[3]{Utrecht University, Netherlands}
\date{July 2018}
\begin{document}

\maketitle

\begin{abstract}
    In a nutshell, we show that polynomials and 
    nested polytopes are
    topological, algebraic and algorithmically equivalent.

    Given two polytops $A\subseteq B$ and a number $k$, 
    the \NPP/ (\NPPshort/) asks, if there exists a 
    polytope $X$ on $k$ vertices such that
    $A\subseteq X \subseteq B$.
    The polytope $A$ is given 
    by a set of vertices and
    the polytope $B$ is given 
    by the defining hyperplanes.
    We show a universality theorem for \NPP/.
    Given an instance $I$ of the \NPPshort/, 
    we define the solutions set
    of $I$ as
    \[V'(I) = \{(x_1,\ldots,x_k)\in \R^{k\cdot n} :
 A\subseteq \conv(x_1,\ldots,x_k) \subseteq B\}.\]
    As there are many symmetries, induced by
    permutations of the vertices, we will consider
    the \emph{normalized} solution space $V(I)$.
    
    Let $F$ be a finite set of polynomials, 
    with bounded solution space.
    Then there is an instance $I$ of the \NPPshort/, which has 
    a rationally-equivalent normalized solution space $V(I)$.
    
  Two sets $V$ and $W$ are \emph{rationally equivalent} 
  if there exists a homeomorphism
  $f : V \rightarrow W$ 
  such that both $f$ and $f^{-1}$ are given by rational functions.
  A function $f:V\rightarrow W$ is a homeomorphism, if it is
  continuous, invertible and its inverse is continuous as well.
    
    As a corollary, we show that \NPPshort/ is 
    \ER/-complete.
    This implies that unless $ \ER/ = \NP/$, the 
    \NPPshort/ is not contained in the complexity class NP.
    Note that those results already follow from a recent paper
    by Shitov~\cite{shitov2016universality}. 
    Our proof is geometric and arguably easier.
\end{abstract}

\section{Introduction}

\paragraph*{Definition.}
In the \NPP/ (\NPPshort/), we are given two polytopes 
$A\subseteq B \subset \R^{n}$
and a number $k\in \N$
and we ask, whether there exists a polytope
$A\subseteq X \subseteq B$ with $k$ vertices.
To be more precise the \emph{inner polytope} 
$A$ is specified by its vertices
and the \emph{outer polytope} 
$B$ is specified by its facets.
Given an instance $I = (A,B,k)$, we denote
by 
 \[V'(I) = \{(x_1,\ldots,x_k)\in \R^{k\cdot n} :
 A\subseteq \conv(x_1,\ldots,x_k) \subseteq B\}.\]
the set of \emph{solutions}.
Here $\conv(x_1,\ldots,x_k)$ denotes the convex hull
of the points $x_1,\ldots,x_k$.
Given a permutation $\pi : [k]\rightarrow [k]$,
we can for every solution $x\in V'(I)$ get a new 
solution denoted by $x_\pi$.
Note that $V'(I)$ has a lot of symmetries as every 
permutation of the vertices of a valid solution
yields again a valid solution.
We say two solutions $x,y$ are \emph{permutation-equivalent}
if there exists a permutation $\pi$ of
the vertices, so that $x_\pi = y$.
We denote this by $x\sim y$.
We define the \emph{normalized solution space} 
by \[V(I) = V'(I)/ \sim.\]
    
It is not a priori clear that $V(I)$ can be interpreted
as a subset of $\R^{kn}$. And for some instances $I$,
this will not be the case. However,
for the instances that we produce it is.
For us every instance $I$ has a set $S$ of
$k$ disjoint segments associated to it.
We will show that on each segment must
lie exactly one vertex in any valid solution.
Let $\prec$ be an order on $S$. 
Then we can think of $V(I)$ simply as the
vertices of the intermediate polytope given 
in the order $\prec$ and thus $V(I)\subset \R^{kn}$.

\paragraph{Rational-Equivalence.}
On a very high-level, a universality theorem states that we can represent
\emph{any} objects of type~$A$ by an object of type~$B$ 
preserving property~$C$.
In our case, objects of type~$A$ are just bounded algebraic varietes,
which we will formally define below. 
They are very versatile as they can 
encode many different mathematical objects
of interest in a straight-forward fashion.
Instances of the \NPPshort/ are the objects of type~$B$. 
At last, we want to preserve algebraic and topological 
properties. 
In this paragraph, we define the notion of rational-equivalence,
which preserve both~\cite{shitov2016universality}.

Let $F$ be a finite set of polynomials $F = \{f_1,\ldots,f_k\}$ 
with $f_i\in\Z[x_1,\ldots,x_n],\,  i=1,\ldots,k$.
Then we define the variety of $F$ as 
\[V(F) = \{x\in \R^n : f(x) = 0, \ \forall f\in F\}.\]
We say $V(F)$ is bounded, if there is a ball $B$ such that
$V(F)\subseteq B$.

Two varieties $V$ and $W$ are \emph{rationally equivalent} 
if there exists a homeomorphism
$f : V \rightarrow W$ 
such that both $f$ and $f^{-1}$ are given by rational functions.
A function $f:V\rightarrow W$ is a homeomorphism, if it is
continuous, invertible and its inverse is continuous as well.
The function $f$ is rational, if it can be component-wise described
as the ratio of polynomials. 
We denote rational-equivalence by $V\simeq W$.
Note that the composition of two homeomorphisms
is a homeomorphism. Similarly, the composition of 
two rational functions is rational.
Next to algebraic and topological properties, we preserve also
algorithmic properties. To state this properly, we will
introduce the complexity class \ER/ in the next paragraph.

\paragraph*{Existential Theory of the Reals.}
In the study of geometric problems, the complexity class \ER/
plays a  crucial role, connecting purely geometric problems 
and Real Algebraic Geometry.
Whereas NP is defined in terms of existentially quantified 
Boolean variables, \ER/ 
deals with existentially quantified real variables.  

Consider a 
first-order formula over the reals that contains only 
existential quantifiers,
 \[\exists x_1,x_2,\ldots,x_n :\Phi(x_1,x_2,\ldots,x_n),\] where $x_1,x_2,\ldots,x_n$ are 
 real-valued variables and $\Phi$ is a quantifier-free formula 
 involving equalities and inequalities of integer polynomials. 
The algorithmic problem \textsc{Existential Theory of the Reals} (\ETR/) takes
such a formula as an input and asks whether it is satisfiable. 
The complexity class \ER/ consists of all problems that reduce 
in polynomial time to \ETR/. 
Many problems in combinatorial geometry and geometric 
graph representation naturally lie in this class, and furthermore,
many have been shown  
to be \ER/-complete, e.g., stretchability of a pseudoline
arrangement~\cite{matousekSegments,Mnev,SchaeferS17},  
recognition of segment intersection graphs~\cite{KratochvilM94} 
and disk intersection graphs~\cite{mcdiarmid2013integer}, 
computing the rectilinear crossing number 
of a graph~\cite{Bienstock91}, etc.
For surveys on \ER/, 
see~\cite{Schaefer09,cardinal2015computational,matousekSegments}.
A recent proof that 
the \AGP/ is \ER/-complete~\cite{ARTETR} provides 
the framework we follow in our~proof.
See also~\cite{SymmetricNash,shor1991stretchability, cardinal2015computational,SchaeferS17, Schaefer09, matousekSegments,AreaKleist, lubiw2018complexity, mcdiarmid2013integer,richter1995realization, cardinal2017recognition, cardinal2017intersection}
for a small selection of \ER/-complete problems.

\paragraph{Results.}
We show a universality theorem for the \NPPshort/. Note that
the result is implied by a  recent result of 
Shitov~\cite{shitov2016universality} about \NMF/
and an old reduction due to 
Cohen and Rotblum~\cite{cohen1993nonnegative}.
Thus we attribute the result to Shitov.
\begin{theorem}[Universality Shitov~\cite{shitov2016universality}.]
\label{thm:Universality}
    For every bounded variety $V(F)$ 
    exists an instances $I$
    of the \NPP/ such that $V(I) \simeq V(F)$.
\end{theorem}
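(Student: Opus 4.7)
The plan is to attack Theorem~\ref{thm:Universality} in two stages. First, reduce the given bounded variety $V(F)$ to a much more restricted algebraic normal form, and second, realise that normal form geometrically by a pair of polytopes built from local gadgets. The first stage is a chain of rational equivalences
\[
V(F) \;\simeq\; V(F') \;\simeq\; V(\ETRINVS/),
\]
where $F'$ is obtained from $F$ by introducing auxiliary variables to drop every polynomial to degree $\le 2$, and $\ETRINVS/$ (or $\ETRINVA/$) is the further normal form in which the only constraints are addition $x+y=z$, inversion $x\cdot y=1$, and a unit constant $x=1$, with every variable confined to a prescribed bounded interval. The changes of variables involved in these reductions are polynomial in one direction and rational in the other, so rational equivalence is preserved throughout. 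This stage is essentially bookkeeping that the paper's framework, modelled on the Art Gallery proof~\cite{ARTETR}, is built to do.

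The geometric stage is where the real work happens. The plan is to associate to each variable $x_i$ of the $\ETRINVS/$ system a fixed segment $s_i\subset \R^n$ in general position, and to build the instance $I=(A,B,k)$ so that in every valid solution $X$ exactly one of the $k$ vertices of $X$ lies on each $s_i$. This is done by choosing $A$ and $B$ so tightly around the union $s_1\cup\cdots\cup s_k$ that no vertex of $X$ can ``miss'' a segment or cover two of them; a pigeonhole/dimension argument on $k$ and on the facets of $B$ then forces a unique segment-to-vertex assignment up to permutation. Under this assignment, the normalised solution space $V(I)\subseteq \R^{kn}$ can be identified with the positions $(t_1,\ldots,t_k)$ of the vertices along their segments, which gives the natural candidate rational map to the tuple of variables of the $\ETRINVS/$ system.

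For each constraint of the normal form I would design a local gadget that cuts additional facets into $B$ and places additional vertices of $A$ in a small neighbourhood of the segments involved, in such a way that the only freedom left in the vertex positions along those segments is exactly the desired algebraic relation. Addition $x+y=z$ can be enforced by a Thales-type parallel-line incidence, and multiplication/inversion $x\cdot y=1$ by a von Staudt-style cross ratio construction; both encode their constraints through pure projective incidences, which is essential because it guarantees that the induced map $V(I)\to V(\ETRINVS/)$ and its inverse are componentwise rational rather than merely smooth. The unit constant $x=1$ is obtained as the specialisation of an equality gadget that pins a vertex to a prescribed point on its segment.

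The main obstacle is combining these gadgets into a single pair $(A,B)$ without interference. Two things must be checked simultaneously: that the gadgets remain \emph{local}, i.e. the facets and inner vertices introduced for one constraint do not impose any hidden relation on segments belonging to a different constraint, and that no new, unintended solutions are created by the interaction of different gadgets (in particular, by vertices of $X$ drifting off of their segments into a facet shared by two gadgets). I plan to handle this by placing each gadget in a thin slab of $\R^n$ disjoint from the others and by using the boundedness assumption of $V(F)$ to keep every segment inside a controlled region. Once locality is established, verifying that the resulting bijection between $V(I)$ and $V(\ETRINVS/)$ is a rational homeomorphism reduces to checking, gadget by gadget, that the incidence conditions are rationally invertible. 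Composing with the reduction from the first stage then yields the desired $V(I)\simeq V(F)$.
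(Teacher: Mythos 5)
Your two-stage outline---first normalize the variety to an \ETRINVS/-type normal form, then realize that normal form by segment gadgets inside a polytope pair---is the same high-level strategy the paper uses, but the details of both stages diverge in ways that matter, and some of your choices would run into real trouble.

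On the algebraic side, you stop at \ETRINVS/, with constraints $x+y=z$, $x\cdot y=1$, $x=1$. The paper goes one step further and reduces to an \emph{array} form \ETRINVA/, where the linear constraints are $\alpha_{i,j}+\alpha_{i,k}=\tfrac{5}{2}$ and $\alpha_{i,j}+\alpha_{i,k}+\alpha_{i,l}=\tfrac{5}{2}$ and the inversion constraints $\alpha_{i,k}\cdot\alpha_{j,k}=1$ relate variables in the same column. This is not cosmetic. The nested polytope constraint available to you is precisely ``a fixed inner vertex must lie in the convex hull of the movable vertices,'' and that is naturally a constraint with a \emph{constant} right-hand side. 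Your normal form $x+y=z$ has a variable on the right; to realize it you would need a gadget saying ``the vertex on segment $s_z$ lies in the convex hull of (something depending on the positions on $s_x, s_y$),'' which is not a condition of the form ``fixed inner point is covered.'' The $\tfrac{5}{2}$-sum form and the row/column arrangement (linear constraints among parallel segments in one orthogonal frame, inversion constraints between two segments meeting at a corner) are what let the paper encode every constraint as a single extra vertex of $A$ landing on a low-dimensional face of $B$.

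On the geometric side, your proposed machinery---Thales-type parallel-line incidences, a von Staudt cross-ratio construction, thin slabs, and ``cutting additional facets into $B$''---is not what the paper does and I think would fail as stated. Von Staudt encodes $z=xy$ through concurrency of auxiliary lines; that is a projective incidence condition, not a convex-hull membership condition, and there is no obvious translation into the \NPPshort/ constraint. The paper instead uses two elementary observations: a fixed barycentric point $q_t$ of a prism is covered by points at heights $\lambda_1,\dots,\lambda_k$ on the prism's parallel edges iff $\sum\lambda_i=tk$; and the origin is covered by $(\alpha_1,-1)$ and $(-1,\alpha_2)$ iff $\alpha_1\alpha_2=1$. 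These are directly convex-hull statements, built to order for the problem. Also, the paper never cuts facets of $B$ per gadget: $B$ is fixed by $m$ and $n$ alone, with only $n+m+3$ facets, and \emph{all} gadget information is carried by adding vertices to the inner polytope $A$, each sitting inside a small face of $B$ spanned only by the relevant segments. That automatically gives the locality you were hoping to get with thin slabs, without any separation argument. Finally, your ``pigeonhole/dimension argument'' to force one vertex per segment is left vague; the paper forces this explicitly with two extra auxiliary orthogonal frames $U_1,U_2$ (housed in two extra coordinates) and center-of-projection points $y_{i,j}, z_{i,j}$ that pin a nested vertex onto the top half of each $\sigma_{i,j}$. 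Your ambient dimension $\R^n$ does not account for these extra auxiliary directions; the paper works in $\R^{2+n+m}$ for exactly this reason. In short: right skeleton, but the load-bearing choices (the array normal form with constant RHS, encoding constraints purely by inner vertices on faces of a fixed $B$, and the explicit segment-forcing frames) are all missing or replaced by mechanisms that don't match the convex-hull nature of the constraint.
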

In this paper, we give a direct proof that does not use 
either of the two above papers. The main ideas of our proof 
are simple geometric constructions. 
This implies that polynomial equations  
have a solution space that is topologically 
and algebraically equivalent to solution spaces given 
by the \NPPshort/.
To illustrate the strength of the statement,
we highlight give one algebraic corollary and 
one topological example.
\begin{corollary}[Algebraic Consequences]
    Let $\Q \subseteq F_1\subset F_2 \subset \R$ be two algebraic field extensions of~\Q. Then there exists an instance of the \NPPshort/
    such that there is a solution in~$F_2$, but not in~$F_1$.
\end{corollary}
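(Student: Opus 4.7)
The plan is to derive this algebraic statement from the Universality Theorem (Theorem~\ref{thm:Universality}) by exploiting that a rational equivalence is given componentwise by ratios of polynomials with rational coefficients: such a map, together with its rational inverse, preserves membership in any field extension of $\Q$, so $F_j$-rational points of $V(F)$ correspond bijectively to $F_j$-rational points of $V(I)$ for $j \in \{1,2\}$. It therefore suffices to produce a bounded variety $V(F) \subset \R^n$, defined over $\Z$, that contains a point in $F_2^n$ but no point in $F_1^n$; Theorem~\ref{thm:Universality} will then translate it into an \NPPshort/ instance with the analogous property on its normalized solution space.

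I would take $V(F)$ to be the zero set of a single univariate polynomial: pick a real algebraic number $\alpha \in F_2 \setminus F_1$ whose minimal polynomial $p \in \Z[x]$ over $\Q$ has no real root inside $F_1$, and set $V(F) = \{x \in \R : p(x) = 0\}$. This set is finite (hence bounded), contains $\alpha \in F_2$, and by the choice of $\alpha$ is disjoint from $F_1$, so all requirements are met simultaneously.

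The substantive obstacle is the existence of such an $\alpha$: a \emph{generic} element of $F_2 \setminus F_1$ may well have real $\Q$-conjugates lying in $F_1$, because distinct real embeddings of $\Q(\alpha)$ into $\R$ need not coincide with the one coming from $F_2$. I would resolve this by a degree argument. Any real conjugate $\beta$ of $\alpha$ satisfies $[\Q(\beta):\Q] = [\Q(\alpha):\Q]$, so $\beta \in F_1$ would force $\Q(\alpha)$ to embed into $F_1$ as a subfield of the same degree. Hence, when $F_2/\Q$ is finite, a primitive element $\alpha$ of $F_2$ over $\Q$ satisfies $[\Q(\alpha):\Q] = [F_2:\Q] > [F_1:\Q]$, ruling out any such embedding. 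In the general case of possibly infinite algebraic extensions, fix any $\alpha_0 \in F_2 \setminus F_1$, let $L \subset \bar{\Q}$ be the Galois closure of $\Q(\alpha_0)$ over $\Q$, and set $K_j := F_j \cap L$. Then $K_1 \subsetneq K_2$ are both finite over $\Q$ (since $\alpha_0 \in K_2 \setminus K_1$), every $\Q$-conjugate of any element of $K_2$ lies in $L$, and a primitive element $\alpha$ of $K_2/\Q$ satisfies $[\Q(\alpha):\Q] = [K_2:\Q] > [K_1:\Q]$. Consequently no real $\Q$-conjugate of $\alpha$ can lie in $K_1 = F_1 \cap L$, and since all such conjugates belong to $L$, none lies in $F_1$ either, as required.
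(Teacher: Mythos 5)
Your proof is correct. The paper states this corollary with no proof at all, so the route you take --- instantiate Theorem~\ref{thm:Universality} on a carefully chosen bounded variety and transport the arithmetic through the $\Q$-rational homeomorphism --- is exactly what the authors intend; you simply make explicit the two pieces they leave implicit. First, you correctly identify that the rational equivalences in the paper's construction have rational coefficients (one can verify this directly from Sections~\ref{sec:EncodingETR} and~\ref{sec:PolytopeBuilding}), which is what makes $F$-points of $V(F)$ correspond bijectively to $F$-points of $V(I)$ for any $F \supseteq \Q$. Second, and this is the genuine mathematical content, you supply the field-theoretic argument showing that some $\alpha \in F_2 \setminus F_1$ has no real $\Q$-conjugate in $F_1$: the degree bound for a primitive element handles the finite case, and cutting both $F_j$ down by a Galois closure $L$ of $\Q(\alpha_0)$ reduces the general case to the finite one while keeping all conjugates inside $L$. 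Both reductions are sound; in particular, the step ``$\alpha$ primitive for $K_2/\Q$ lies in $F_2 \setminus F_1$'' and the step ``a real conjugate of $\alpha$ inside $F_1$ would land in $K_1$ because it lies in $L$'' both check out. This is a complete and careful proof, and it is arguably more than the paper gives, since the existence of such an $\alpha$ is not trivial for arbitrary (possibly infinite) algebraic extensions.
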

This implies for instance the result by~\cite{chistikov2017nonnegative},
who showed that there is an instance of the \NPP/ that requires 
irrational coordinates.

\begin{example}[Topological Consequences]
    Let $T$ be a torus, then there is an instance
    of the \NPPshort/ such that the solution space
    is homeomorphic to $T$.
\end{example}
Note that the polynomial equation
\[f(x,y,z) = (x^2+y^2+z^2+R^2-r^2)^2-4R^2(x^2+y^2)=0\]
describes a torus with the two radii $r$ and $R$.
To see the last corollary, 
simply apply Theorem~\ref{thm:Universality}
on the variety given by $f$, with $R = 10, r = 1$.

As all the steps involved to show this universality theorem
take polynomial time to execute we can infer the algorithmic
complexity of the \NPPshort/.
\begin{corollary}[Shitov~\cite{shitov2016universality}]
    The \NPP/ is \ER/-complete.
\end{corollary}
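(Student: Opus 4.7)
The plan is to establish membership in \ER/ and \ER/-hardness separately and then combine them.

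For membership, the natural approach is to write \NPPshort/ directly as an \ETR/ sentence. Introduce $kn$ real variables for the coordinates of the $k$ candidate vertices $x_1,\dots,x_k$, and, for each vertex $a$ of $A$, auxiliary real variables $\lambda_1,\dots,\lambda_k$ expressing $a$ as $\sum_j \lambda_j x_j$ with $\lambda_j\geq 0$ and $\sum_j \lambda_j = 1$. The inclusion $\conv(x_1,\dots,x_k)\subseteq B$ is enforced by requiring each $x_j$ to satisfy every defining half-space of $B$. All constraints are polynomial (in fact linear or bilinear) in the introduced variables, and both the number of variables and the number of atomic predicates are polynomial in $|I|$, so \NPPshort/ lies in \ER/.

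For hardness, the plan is to reduce from \ETR/ restricted to instances whose solution set is known to lie in a bounded region. This restricted problem is still \ER/-complete: one may either invoke a standard bounded variant such as \ETRINVA/ from~\cite{ARTETR}, or explicitly append a constraint of the form $\sum x_i^2 \leq M$ using a known singly-exponential bound on the coordinates of any solution. Given such a polynomial system $F$, Theorem~\ref{thm:Universality} produces an instance $I$ of \NPPshort/ with $V(I)\simeq V(F)$. Because rational equivalence is in particular a homeomorphism, $V(F)\neq\emptyset$ iff $V(I)\neq\emptyset$ iff $I$ is a yes-instance; combined with membership this gives \ER/-completeness.

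The main obstacle is that Theorem~\ref{thm:Universality} as stated only asserts the \emph{existence} of a suitable $I$, whereas for the algorithmic corollary we need $I$ to be computable from $F$ in polynomial time. Concretely, one must trace through each geometric gadget in the proof of Theorem~\ref{thm:Universality} and verify that the ambient dimension $n$, the number of vertices of $A$, the number of facets of $B$, the number $k$ of vertices of the intermediate polytope, and the bit-sizes of all coordinates of $A$ and $B$ remain polynomial in the encoding length of $F$. The delicate part is ensuring that no gadget forces the coordinates of $A$ or $B$ to be irrational or to have super-polynomial bit-complexity; since the constructions in the paper are built from explicit rational points and incidence conditions, this bookkeeping should go through, but it is the one piece of content that has to be added on top of the topological and algebraic equivalence guaranteed by Theorem~\ref{thm:Universality}.
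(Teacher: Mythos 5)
Your proof is correct and follows essentially the same route as the paper: membership by a direct \ETR/ encoding (using auxiliary convex-combination variables for the vertices of $A$ and the half-space description of $B$), and hardness by applying Theorem~\ref{thm:Universality} to a bounded \ER/-hard system and observing that a rational equivalence is in particular a homeomorphism, so emptiness is preserved. The paper itself only states the hardness side and compresses the justification into the single remark that ``all the steps involved to show this universality theorem take polynomial time to execute''; you correctly identify this polynomial-time bookkeeping as the one piece of content that the universality statement alone does not give, and indeed that is exactly what the paper is asserting in that sentence (the chain through \ETRINVS/ and \ETRINVA/ to the explicit rational coordinates in Subsection~\ref{sect:explicit} has polynomial size and bit-complexity). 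You also supply the \ER/-membership argument explicitly, which the paper leaves implicit.
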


In the rest of the introduction, we survey the literature
on the \NPPshort/ and the closely related problem of \NMF/.

\paragraph{Proof Overview.}
The proof consists of two parts.
In Section~\ref{sec:EncodingETR}, we show that
a certain very simple set of polynomial equations
is already \ER/-complete and admits the
desired universality property. In a nutshell,
we only allow only the constraints
$x\cdot y = 1$ and $x+y+z = 5/2$.

In the second part, in Section~\ref{sec:PolytopeBuilding}, 
we are encoding those 
constraints in the \NPPshort/. 
The first idea is to enforce certain
vertices to lie on specific line segments.
Those vertices are encoding variables.
It is very easy to build polytopes that
encode the two constraints explained above.
The main technical challenge is to "stick"
those smaller building blocks together to a "big one".
This is easy, \emph{if} you are used to work with polytopes
in higher dimensions. In our description, we do not
assume the reader to have that familiarity.
See also Figure~\ref{fig:BuildingBlocks}

\begin{figure}
    \centering
    \includegraphics{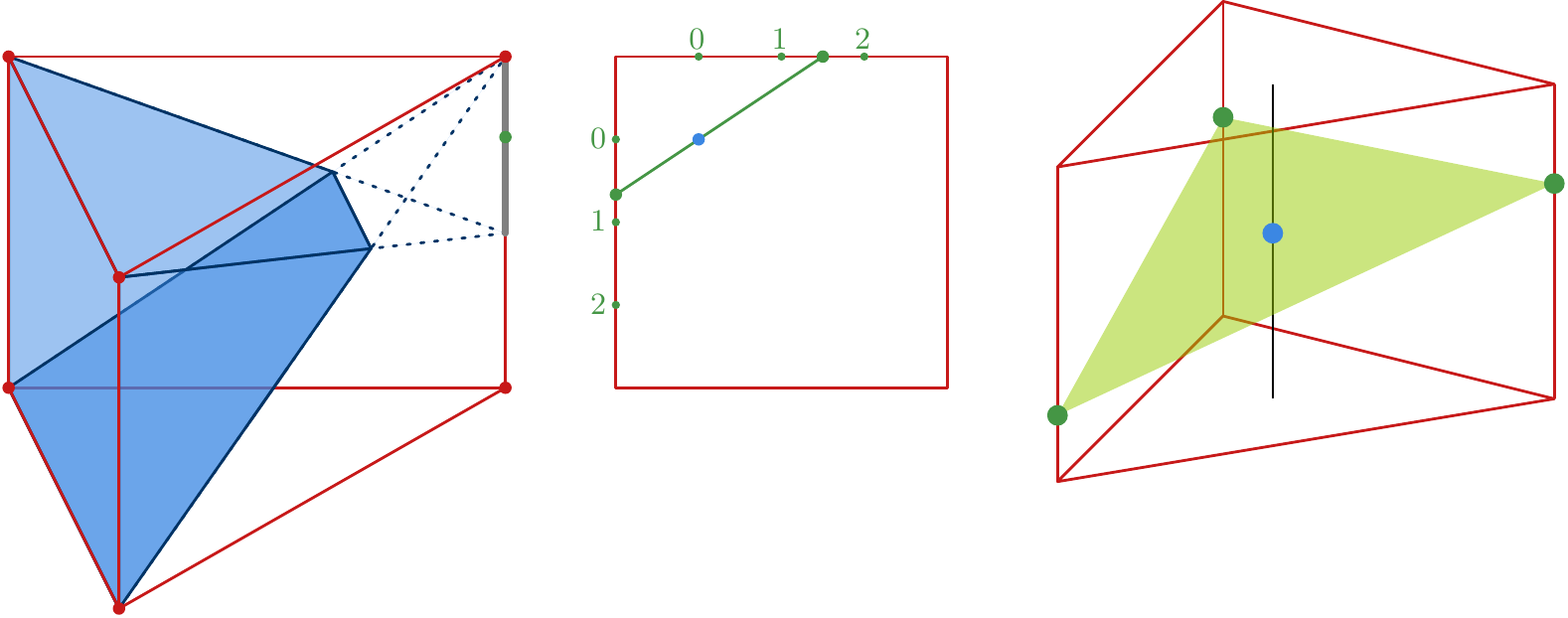}
    \caption{Red indicates the outer polytope, blue the inner polytope, and green the nested polytope. 
    Left: The nested polytope must have one vertex on 
    the left bottom edge. Middle: Assuming that one nested vertex is on the left edge and one at the top edge, then the two 
    vertices encode inversion. Right: Assuming that
    each green vertex is forced to be on its vertical segment,
    then the blue vertex enforces the constraint $x+y+z = 5/2$.}
    \label{fig:BuildingBlocks}
\end{figure}

\paragraph*{Related Work on Nested Polytopes.}
To the best of our knowledge the
\NPPshort/ was first mentioned by Silio in 1979~\cite{silio1979efficient},
who could find an $O(nm)$ time algorithm in the
case that the outer and inner polytope 
are convex polygons in the plane with
$n$ and $m$ vertices respectively.
Additionally, Silio restricts to the case $k=3$.
The motivation of Silio came from a 
connection to Stochastic Sequential Machines.

Independently, Victor Klee suggested the same problem
as was pointed out in several papers~\cite{dasCCCG, aggarwal1989finding, das1990approximation, o1988computational,das1997complexity},
the first of them dating back to 1985.
In particular, the \NPPshort/ appears as one of the open problem 
in the Computational Geometry Column $\# 4$~\cite{o1988computational}.
The main motivation of those early papers 
used to be simplification of a given 
polytope, see Figure~\ref{fig:Simplification}.
\begin{figure}
    \centering
    \includegraphics{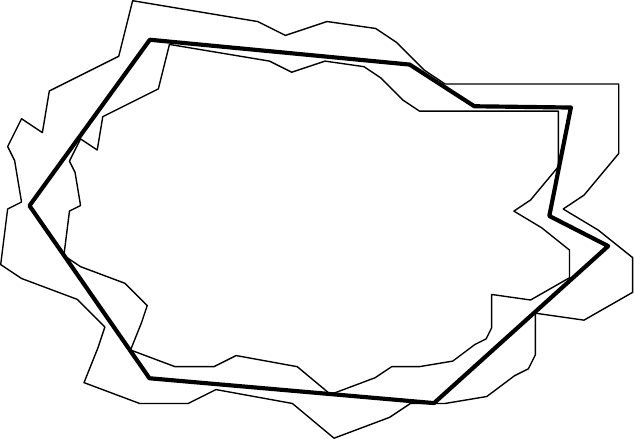}
    \caption{Using an algorithm for the nested polygon problem
    it is possible to attain a simplified version of the previous
    polygon.}
    \label{fig:Simplification}
\end{figure}

Among the first results is an $O(n\log k)$ algorithm for the nested
convex polygon problem~\cite{aggarwal1989finding}.
On the lower bounds side, Das and Joseph showed NP-hardness
for the \NPPshort/ in dimension three~\cite{dasCCCG, das1990approximation,das1992minimum,das1997complexity}. 

In 1995, Suri and Mitchell were able to reduce the \NPPshort/ to a 
set cover problem, by loosing only a factor of $d$.
Their motivation to study the \NPPshort/ came from separating
geometric objects.
Using the greedy approximation scheme for set cover
they attain an $O(d\log n)$-approximation algorithm that runs
in $O(n^{d+1})$ time 
($n=$ number of facets of inner and outer polytope, $d=$ dimension).
This was consequently improved by 
Br{\"o}nnimann and Goodrich~\cite{bronnimann1995almost}
as the first application in their seminal paper on $\varepsilon$-nets.
The key observation is that the set-cover system
described by Suri and Mitchell has bounded VC-dimension.
Their algorithm runs in $O(n^{d+2}\log^d n)$ and gives an
$O(d^2 \log OPT)$-approximation ($OPT =$ size of the optimal solution) .
Independently Clarkson~\cite{clarkson1993algorithms} found a
similar approximation algorithm using techniques 
from linear programming.

Interestingly, the \NPPshort/ has close relations to \NMF/.
We define \NMFshort/, explain the history and this relation
in the next paragraph.

\paragraph{Non-Negative Matrix Factorization.}
In a parallel line of research the \NMFshort/ 
is explored, with the earliest mentioning,
we found, in 1973~\cite{berman1973rank}.
The non-negative matrix factorization is defined as follows. 
Given a matrix $M\in \R_+^{m\times n}$ and a number $k$,
we say that $M = V \cdot W$, for matrices $V\in \R_+^{m\times k}$
and $W\in \R_+^{k\times n}$, is a non-negative matrix factorization
of inner dimension $k$. We denote by $\R_+$ the set of 
non-negative real numbers.
We denote by $\rank_+(M)$ the non-negative rank, which 
is the smallest inner dimension, for which a non-negative
matrix factorization exists.

While it is said that \NMFshort/ has many applications in image processing,
machine learning, dimension reduction and clustering,
in theory it is most famous for the relationship to extension complexity.
Given a polytope $P$, its extension complexity, is the smallest
number $k$ such that there is a polytope $Q$ on $k$ facets
such that there is a linear projection from $Q$ to $P$.
Yannakakis showed in its seminal paper~\cite{yannakakis1991expressing} 
(Roughly saying that on needs an exponential size \emph{symmetric} LP
to solve the travelling salesperson problem.)
that the extension complexity of a polytope is
the non-negative rank of its slack matrix.
See also~\cite{SamNonSymmetric} for the lower bound for
non-symmetric LPs.

For us most relevant is a reduction from the \NMFshort/ to \NPPshort/
by Cohen and Rothblum~\cite{cohen1993nonnegative}.
To the delight of the reader, we repeat this reduction in the appendix.
\begin{restatable}[Cohen Rothblum~\cite{cohen1993nonnegative}]
{lem}{CohenReduction}
\label{lem:CohenRothblum}
    Let  $M\in \R^{m\times n}_+$ be a matrix and $k\in \N$ be a number.
    Let $A$ be the convex hull intersected with 
    $H = \{x\in \R^{m} : \sum_i x_i = 1\}$ and
    $B$ the positive orthant intersected with $H$.
    Then $(A,B,k)$ as an instance to the \NPPshort/ is equivalent
    to $(M,k)$ as \NMFshort/.
\end{restatable}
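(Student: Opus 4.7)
The plan is to build a direct two-way correspondence between non-negative factorizations $M = VW$ of inner dimension $k$ and nested polytopes $X$ with at most $k$ vertices satisfying $A \subseteq X \subseteq B$. Here I interpret the statement with $B = H \cap \R_+^m$ equal to the standard simplex and $A$ equal to the convex hull of the (normalized) columns of $M$ inside $B$; this is the classical Cohen--Rothblum setup.

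The preparatory step is normalization. Scaling each column of $M$ by the reciprocal of its coordinate sum amounts to right-multiplication by a positive diagonal matrix, which can be absorbed into any factor $W$ and hence does not change the non-negative rank. Thus we may assume every column of $M$ lies on $H$, so that the columns of $M$ are points of $B$ and $A$ is simply their convex hull.

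For the forward direction, suppose $M = VW$ with $V \in \R_+^{m \times k}$ and $W \in \R_+^{k \times n}$. Rescaling each column of $V$ by its coordinate sum (and absorbing the inverse scaling into the corresponding row of $W$) places every column of $V$ on $H$ as well. Summing the identity $M = VW$ coordinatewise and using that the columns of $M$ and of $V$ both sum to $1$ forces the columns of $W$ to sum to $1$. Hence each column $M_j = \sum_i W_{ij}V_i$ is a convex combination of the points $V_1,\ldots,V_k \in B$, so the polytope $X = \conv(V_1,\ldots,V_k) \subseteq B$ has at most $k$ vertices and contains every column of $M$, hence contains $A$. Conversely, given a nested polytope $X = \conv(v_1,\ldots,v_k)$ with $v_i \in B$ and $A \subseteq X$, let $V$ be the matrix whose columns are the $v_i$; since each column $M_j$ lies in $X$, it can be written as a convex combination $\sum_i w_{ij} v_i$, and collecting these coefficients into $W$ yields a non-negative factorization $M = VW$ of inner dimension $k$.

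The main technical nuances are (i) zero columns of $V$ or $M$, which break the rescaling step, and (ii) ensuring that inverting the initial normalization recovers a factorization of the original matrix. Both are routine: zero columns of $V$ can be discarded (lowering $k$, which is fine), zero columns of $M$ are factored trivially by a zero column of $W$, and the diagonal rescalings on the left and right are invertible whenever they are defined. The real content of the lemma is the single observation that a \emph{non-negative} combination with unit column sums is exactly the same thing as a \emph{convex} combination inside the simplex $B$.
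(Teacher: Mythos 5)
Your proof is correct and follows the same approach as the paper: normalize the columns of $M$ to lie on the simplex $B = H \cap \R_+^m$, then exploit the identity between non-negative combinations with unit coordinate sum and convex combinations inside $B$, building $V$ from vertices of a nested polytope and $W$ from the barycentric coefficients (and vice versa). One thing worth noting: you prove both implications cleanly, whereas the paper's own write-up appears to argue the same implication (nested polytope $\Rightarrow$ factorization) in both paragraphs --- the one labeled ``reverse direction'' again takes $X$ as given and constructs $V, W$, rather than starting from a factorization $M = VW$ and producing $X$. Your forward direction --- rescaling the columns of $V$ onto $H$, observing that the columns of $W$ are then automatically stochastic, and taking $X = \conv(V_1,\ldots,V_k) \subseteq B$ containing all columns of $M$ --- supplies exactly the step missing from the paper's version. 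The zero-column and normalization caveats you flag are indeed routine and handled correctly.
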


In 2009, Vavasis~\cite{vavasis2009complexity} showed
that exact-non-negative matrix factorization is equivalent 
to the intermediate simplex problem.
In particular, their results imply that 
the \NPPshort/ is already hard, if the intermediate 
polytope is restricted to be a simplex.
In a similar way,
Gillis and Glineur~\cite{gillis2012geometric} in~2010
showed that restricted non-negative matrix factorization
is equivalent to \NPPshort/.

Although, it is easy to encode \NMFshort/ as an algebraic decision
problem the huge number of variables makes it algorithmically 
infeasible. 
In 2012, Arora Ge, Kannan and Moitra~\cite{AroraNMFProvably} found an
algorithm that runs in polynomial time for every fixed $k$.
They also showed that there is
no $(nm)^{o(k)}$, assuming ETH.
In 2016 Moitra~\cite{MoitraAlmostOpt} improved 
the upper bound and gave an $(nm)^{O(k^2)}$ algorithm for \NMFshort/.
Note that those results translate immediately
to results about \NPPshort/, due to the reductions mentioned above.

Chistikov, Kiefer, Maru{\v{s}}i{\'c}, Shirmohammadi, Worrell
have shown that the optimal solution of 
the $\textsc{Nested}$ $\textsc{Polytope Problem}$ requires irrational coordinates already
in dimension $d=3$ and $k=5$. It is an open problem,
if the intermediate-simplex problem requires irrational
coordinates as well.
This was also shown in parallel by 
Shitov~\cite{shitov2017nonnegative}.
Furthermore, Shitov showed a universality
result for \NMFshort/, very similar to our result. 
In particular, his result implies \ER/-completeness
of both \NMFshort/ and the \NPPshort/.

%%%%%%%%%%%%%%%%%%%%%%%%%%%%%
\section{Encoding ETR}
\label{sec:EncodingETR}
%%%%%%%%%%%%%%%%%%%%%%%%%%%%%
In this section, we define the algorithmic
problem and the complexity class both called
the \textsc{Exitential Theory of the Reals}.
For distinction, the algorithmic problem is denoted 
by \ETR/ and the complexity class by \ER/.

An instance of \ETR/ is a well-formed logical formula
of the form \[\exists x_1,\ldots,x_n : \phi(x_1,\ldots,x_n).\]
The subformula $\phi$ is quantifier free. 
It has polynomial equations and  (strict) 
inequalities as atomic formulas.
Those atomic formulas can be combined in any boolean way.
For example:
\[\exists x,y,z : [(x^2 + y^2 = 1) \land (x = -2)]
\lor \lnot (y^2z < -1).\]
Strictly speaking, we are only allowed to use variables and 
the symbols
\[\Sigma = \{+,\cdot, =,>,\leq,(,),0,1,\land,\lor,\lnot\}.\]
However, we interpret $x^2$ as $x\cdot x$ and $x = -2$
as $x + 1 + 1 = 0$ and so on.
We are asking if there is an assignment of real numbers to
the variables such that the formula $\phi$ becomes true.
Note that the definition in the introduction and the more 
precise definition here are equivalent, although this
may not be obvious.

It is not a priory clear that there even exists an algorithm
which can decide this problem. Due to Tarski's 
Quantifier Elimination~\cite{Tarski}, we know that this question 
can be decided. Even more, we know that the problem can be solved
in polynomial space, due to Canny~\cite{CannyPSPACE}.
The complexity class \ER/ is defined as the set of algorithmic
problems that can be reduced in polynomial time to
\ETR/.

\bigskip

An {\bf \ETRINVS/} $\mathcal{S}$ of size $n$ is a vector of $n$ real variables $(x_1, \dots, x_n)\in [\frac{1}{2}, 2]^n$ together with a system of linear and quadratic equations of the form
\[x+y = z \;\;\; , \;\;\; x\cdot y = 1. \]

The solution space of an \ETRINVS/ $\mathcal{S}$ is the set of all
vectors in $[\frac{1}{2},2]^n \subset \mathbb{R}^n$ which satisfy the
equations of $\mathcal{S}$, where we do allow the possibility of an 
empty solution space. (Note that the solution space of an \ETRINVS/
 is a real semi-algebraic set.) It was shown 
in \cite[Lemma 12]{ARTETR} that the problem of determining whether 
an \ETRINVS/ has a non-empty solution space is \ER/-complete.
(The original formulation of \ETRINVS/ included the 
equation $x=1$, but this equation can be obtained by 
$x\cdot x = 1$ and  $x\in [\frac{1}{2}, 2]$) 
Although, this was not pointed out directly, if we follow
the reduction it is easy to observe that
all steps are rationally-equivalent.
\begin{lemma}[Universality Inversion]
    Let $F$ be a finite set of polynomials $F = \{f_1,\ldots,f_k\} \subset \Z[x_1,\ldots,x_n]$,
    with bounded solution space.
    There is an instance $I$ of \ETRINVS/ such that
    \[V(I) \simeq V(F).\]
\end{lemma}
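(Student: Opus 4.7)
The plan is to follow the reduction in \cite[Lemma 12]{ARTETR}, which already shows that deciding satisfiability of an \ETRINVS/ is \ER/-hard, and to verify that each step of that reduction is in fact a rational equivalence of solution spaces, not merely a satisfiability-preserving reduction. Since $\simeq$ is closed under composition, it suffices to check each local modification in isolation.

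First, I normalize the domain. Since $V(F)$ is bounded, fix $N \in \N$ with $V(F) \subseteq (-N, N)^n$. The affine map $\varphi(x) = (x + 2N)/(4N)$ sends this box bijectively into $(1/2, 2)^n$, with both $\varphi$ and $\varphi^{-1}$ rational. Substituting $\varphi^{-1}$ into each $f_i$ and clearing denominators produces an integer polynomial system $\tilde F$ whose solution set in $[1/2, 2]^n$ is rationally equivalent to $V(F)$.

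Next, decompose each $\tilde f_i$ into a straight-line program over $\{+, \cdot\}$ by introducing one auxiliary variable $y_\ell$ per intermediate subexpression, with equations of the form $y_\ell = y_a + y_b$ or $y_\ell = y_a \cdot y_b$ and a final equation pinning the root to $0$. Each auxiliary is a polynomial in the original variables, so the enlarged solution space projects onto $V(\tilde F)$ by a rational bijection whose inverse is also rational. A further affine rescaling of each $y_\ell$, possible because each subexpression is continuous on the compact box, forces all auxiliaries into $[1/2, 2]$; the rational constants needed for this rescaling are built inside the system by observing that $x \cdot x = 1$ with $x \in [1/2, 2]$ pins $x = 1$, and then iterating additions and inversions. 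Finally, each remaining product equation $y_\ell = y_a \cdot y_b$ is replaced by a sub-system using only the allowed equation types $x + y = z$ and $x \cdot y = 1$, via the rational multiplication gadget of \cite{ARTETR}. This gadget can be assembled from identities such as $xy = \frac{1}{4}((x+y)^2 - (x-y)^2)$ together with a squaring sub-gadget expressible from sums and inversions (using identities of the form $\frac{1}{x-c} + \frac{1}{x+c} = \frac{2x}{x^2 - c^2}$). Each elementary step is a rational bijection onto its image.

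The main obstacle is this last step: one must choose constants and pre-/post-rescalings so that every intermediate auxiliary in the multiplication gadget stays in $[1/2, 2]$, keeping all inversions well-defined and each gadget a homeomorphism on its domain. This is where the boundedness hypothesis on $V(F)$ is genuinely needed, since otherwise intermediate values could approach $0$ or $\infty$ and the rational equivalence would break. Composing the three rational equivalences yields an \ETRINVS/ $I$ with $V(I) \simeq V(F)$.
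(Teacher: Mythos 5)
Your proposal takes essentially the same approach as the paper: both defer to the reduction of~\cite[Lemma~12]{ARTETR} and argue that each step of that reduction is a rational equivalence because each modification introduces a new variable that is rationally determined (and rationally invertible) in terms of the old ones, with the boundedness hypothesis invoked precisely at the rescaling step. You fill in more of the mechanics (the affine normalization, the straight-line-program decomposition, the multiplication gadget via $xy=\tfrac14((x+y)^2-(x-y)^2)$), whereas the paper's sketch is terser and additionally flags that the inequality gadget $X>0 \rightsquigarrow XY^2-1=0$ would \emph{not} preserve topology but is irrelevant here since $F$ consists of equations only; your proof sidesteps that point implicitly by never introducing inequalities. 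Both proofs leave the same residual burden unproved in detail, namely that the constants and rescalings in the cited reduction can be chosen so that every auxiliary stays in $[\tfrac12,2]$, so the level of rigor matches the paper's sketch.
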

\begin{proof}[Proof Sketch.]
    In almost every step of the reduction a new variable and 
    a new constraint is introduce. All other variables are 
    left as they are.
    For example, $Y\cdot X_i - 1 = 0$, where $X_i$ is an old variable and
    $Y$ is a new variable. Of course, there is the assumption that $X\not = 0$.
    We see that $Y = 1/X$, which determines 
    that the new and the old system of equations are rationally equivalent.
    To be explicit, the following mapping 
    \[f: (x_1,\ldots,x_n) \mapsto (x_1,\ldots,x_n,1/x_i),\]
    is a homeomorhpism, 
    it is rational and its inverse is rational as well.
    
    Note that there are two exceptions.
    The inequality $X>0$ is replaced by $XY^2 -1 = 0$.
    Note that this step does not preserve homotopy
    as the two sets 
    \[S = \{x: x>0 \}\]
    and 
    \[T = \{(x,y):  xy^2 -1 = 0\}\]
    do not have the same number of connected components.
    However, we restrict ourselves to systems of polynomial
    \emph{equations}.
    
    The second exception is when all variables are scaled down 
    to a small range. Note that the sets 
    \[A = \{(x,y): x+y =1 \}\]
    and 
    \[B = \{(x,y):  x+y =1 , -L\leq x,y\leq L\}\]
    are not rationally equivalent. 
    But again, this does not apply to us, as we assume that
    the initial solution space is bounded.
\end{proof}

While \ETRINVS/ is the right intermediate problem to show 
that the Art Gallery Problem is \ER/-complete, 
for the purpose of this paper however, 
it will be more convenient 
to work with a slight modification of \ETRINVS/,
which we introduce now. 

An {\bf \ETRINVA/} $\mathcal{A}$ of size $m\times n$ is an $m$-by-$n$ matrix of variables $A = (\alpha_{i,j}) \in [\frac{1}{2}, 2]^{m\times n}$ together with a system of linear and quadratic equations of the form
\[\alpha_{i,j} + \alpha_{i,k} = \tfrac{5}{2} \;\; \; , \;\;\; \alpha_{i,j} + \alpha_{i,k} + \alpha_{i,l} = \tfrac{5}{2} \;\;\; , \;\;\; \alpha_{i,k}\cdot \alpha_{j,k} = 1.\]
(Note that the linear equations relate variables in the same row and the quadratic equations relate variables in the same column.)

The solution space of an \ETRINVA/ is defined similarly as for \ETRINVS/ and is a semi-algebraic subset of $[\frac{1}{2},2]^{m\times n} \subset \mathbb{R}^{m\cdot n}$. We now have the following lemma.

\begin{lemma}[Uninversality of \ETRINVA/]\label{lem:sys-arr}
Let $\mathcal{S}$ be an {\em \ETRINVS/} on $n$ variables.
There exists an {\em \ETRINVA/} \A{} of size $3\times2n$ such that the solution spaces of \S{} and \A{} are rationally-equivalent.
The description complexity of \A{} is linear in \S{}.
\end{lemma}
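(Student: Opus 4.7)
The plan is to construct, for each \ETRINVS/ $\S$ on $n$ variables, an \ETRINVA/ $\A$ of size $3 \times 2n$ whose entries are governed by equations that encode both the variables and the constraints of $\S$. The two primitives available in $\A$ are the affine complement $x \mapsto 5/2 - x$ furnished by the row equations ($a + b = 5/2$ and $a + b + c = 5/2$), and the multiplicative inverse $x \mapsto 1/x$ furnished by the column equation ($a \cdot b = 1$). Together these are rich enough to simulate both primitives of $\S$: a product $x \cdot y = 1$ already matches the column form of $\A$, while a sum $x + y = z$ is rewritten as $x + y + (5/2 - z) = 5/2$, matching the $3$-term row form.

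For the construction, associate to each variable $x_i$ of $\S$ a pair of columns of $\A$ (say columns $2i - 1$ and $2i$), and designate one entry of this block to carry the value $x_i$; the other five entries will carry $5/2 - x_i$, $1/x_i$, and $1/(5/2 - x_i)$ in a fixed pattern across the three rows. This pattern is enforced by ``baseline'' row equations (of the form $\alpha + \beta = 5/2$) and column equations (of the form $\alpha \cdot \beta = 1$) within each block, so that all six entries are jointly a rational function of the single value $x_i$. Each constraint of $\S$ is then encoded by adding one or two further equations of $\A$: a sum $x_a + x_b = x_c$ becomes the $3$-term row equation on the entries in a common row that hold copies of $x_a$, $x_b$, and $5/2 - x_c$, and a product $x_a \cdot x_b = 1$ becomes a column equation between entries holding copies of $x_a$ and $x_b$ that have been propagated into a common column via the baseline equations.

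Rational equivalence then follows from the explicit maps: the forward map $\phi: V(\S) \to V(\A)$ evaluates the baseline rational expressions on $(x_1, \dots, x_n) \in V(\S)$, and the inverse $\phi^{-1}$ is the coordinate projection onto the $n$ primary entries. Both maps are rational---built only from $x \mapsto 5/2 - x$ and $x \mapsto 1/x$---and continuous, so $V(\S) \simeq V(\A)$.

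The main obstacle is designing the baseline layout so that, for every product constraint $x_a \cdot x_b = 1$ of $\S$, one can arrange copies of $x_a$ and $x_b$ in a common column of $\A$, and similarly arrange the three summands of every sum constraint in a common row, all within the fixed $3 \times 2n$ size. The budget is tight: $6n$ entries must accommodate the variables, their complements, their reciprocals, and enough ``docking positions'' for every system equation. The key observation making this feasible is that the dual availability of $x_i$ and $1/x_i$ in each variable's column-pair provides sufficient algebraic flexibility: one can always reach a given system variable or its reciprocal from any designated position via a short chain of row and column equations, so the system constraints can be encoded by adding only $O(|\S|)$ further equations, without increasing the array size or introducing spurious solutions. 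Verifying this rerouting in detail is the bulk of the technical work.
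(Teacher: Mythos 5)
Your high-level plan matches the paper's: a $3\times 2n$ array with a two-column block per variable, row equations for the affine complement and column equations for inversion, and explicit rational maps in both directions. But the core mechanism you give for the product constraints does not work as stated, and a preliminary reduction that the paper relies on is missing, so there is a genuine gap.

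First, the wiring for $x_a\cdot x_b = 1$. You say ``a product $x_a\cdot x_b = 1$ becomes a column equation between entries holding copies of $x_a$ and $x_b$ that have been propagated into a common column via the baseline equations.'' Your baseline equations are, by your own description, \emph{within each block}. Two distinct blocks occupy disjoint pairs of columns, so no within-block equation can place a copy of $x_a$ and a copy of $x_b$ in the same column; column equations can only relate entries that already share a column. The paper's mechanism is different and not a ``propagation into a common column'': it uses a row equation through a shared shuttle entry ($\beta_i+\gamma_i=5/2$ and $\beta_j+\gamma_i=5/2$, forcing $\beta_i=\beta_j$), while separate column equations inside each block pin $\beta_i=1/y_i$ and $\beta_j=y_j$. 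Equality of the $\beta$'s then forces $y_j=1/y_i$, i.e.\ $y_iy_j=1$, without ever putting $y_i$ and $y_j$ into one column. If you tried to literally ship a copy of $x_b$ into one of block $a$'s columns with a cross-block row equation, you would still need a free slot in that column and further column equations to tie it to $x_a$; the budget of three entries per column does not leave room if this is done naively.

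Second, the budget itself. Your six entries per block are already down to two free slots after installing $x_i,\,5/2-x_i,\,1/x_i,\,1/(5/2-x_i)$. A variable appearing in several inversion constraints then exhausts its block. The paper handles this with a preliminary WLOG step: if $x\cdot y=1$ and $y\cdot z=1$ both appear, then $x=z$, so one can merge variables until each appears in at most one inversion constraint (and this merging preserves rational equivalence). Without that reduction, the $3\times 2n$ size is not achievable, and your claim that ``the system constraints can be encoded by adding only $O(|\S|)$ further equations, without increasing the array size'' is unsupported. You flag that ``verifying this rerouting in detail is the bulk of the technical work,'' but the two issues above are not routine verification---they are the places where the design must actually change to match the paper's shuttle-and-merge mechanism.
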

\begin{proof}
    Let us denote by $x_1,\ldots, x_n$ the variables 
    of \S{}. 
    First note that we can assume without 
    loss of generality that every variable $x_i$
    in \S{} is in at most one inversion constraint 
    involved. Otherwise $x\cdot y =1$ and $y\cdot z =1$,
    implies $x=z$ and we can replace $z$ everywhere by
    $x$ and forget about $z$. Note that this preserves
    rational equivalence.

    We denote the variables of \A{} by
    $y_i,\alpha_i,\beta_i,
    \gamma_i, \delta_i,\varepsilon_i$, for $i=,\ldots,n$
    and we write them into the array as follows
\[\left(
\begin{array}{cc}
     y_1 \ldots y_n &  \alpha_1 \ldots \alpha_n\\
     \beta_1 \ldots \beta_n &   \gamma_1 \ldots \gamma_n \\
     \delta_1 \ldots \delta_n &   
     \varepsilon_1 \ldots \varepsilon_n 
\end{array}
\right).
\]
We want that all constraints in \S{} for the
$x_i$ variables hold in \A{} for the corresponding
$y_i$ variables. The remaining variables in 
\A{} are supposed to be completely determined.
We introduce the linear constraint
\[y_i + \alpha_i = 5/2,\]
for every $i=1,\ldots,n$.
Let us first consider linear constraints of the form
\[x_i+x_j = x_k,\]
in \S{}.
We introduce the linear constraint,
\[y_i+y_j + \alpha_k = 5/2.\]
Note that this implies 
\[y_i + y_j = y_k.\]
This encodes all linear constraints.
Now let us consider the quadratic constraints,
involving two different variables.
Note that we denote the pairs of 
constraints as 
\[C = \{(i,j) : i<j,\, x_i\cdot x_j = 1\}.\]
For every $(i,j) \in C$, we are adding the constraints
\[y_i \cdot \beta_i = 1 \text{ and } \beta_i \cdot \delta_i = 1.\]
Similarly, we add the constraints
\[y_j \cdot \delta_j = 1 \text{ and } \beta_j \cdot \delta_j = 1.\]
This enforces $\beta_i = 1/y_i$ and $\beta_j = y_j$.
Furthermore, we add the constraints 
\[\beta_i + \gamma_i = 5/2,
\text{ and }
\beta_j + \gamma_i = 5/2,\]
This enforces $1/y_i = \beta_i = \beta_j = y_j$, as desired.

Let us now consider the special case of $x_i \cdot x_i = 1$.
Recall that this is equivalent to $x_i =1$.
We introduce the constraints 
\[y_i \cdot \beta_i = 1, \  \beta_i \cdot \delta_i = 1, 
\text{ and } y_i \cdot \delta_i = 1.\]
This is equivalent to  $y_i = 1$.

Note that all the $\varepsilon_i$'s variables and some of the 
$\beta_i, \gamma_i,\delta_i$ variables are 
still completely unconstrained.
Add any linear constraint of the form $a+b = 5/2$ 
with an already used variable 
to them, 
so that they are uniquely determined, 
by one of the $y_i$.
We have that the set of constraints on the 
$x_i$'s and one the $y_i$'s are exactly the same.

Let use denote by $V = V(\S)$ the solution space of \S{} and
by $W = V(\A)$ the solution space of \A{}.
We have to show that $V$ and $W$ are rationally-equivalent.
To this end we define the mapping \[f:V\rightarrow W.\]
Let $(x_1,\ldots,x_n)$ in $V$. Then 
we define $y_i := x_i$. We define
$\alpha_i := 5/2 - x_i$.
Each $\beta_i$ and $\delta_i$ is either $x_i$ or $1/x_i$,
if the index $i$ was contained in a pair in $C$.
All other variables $\theta$ are of the form 
$5/2 - x_j$ or $5/2 - 1/x_j$, for some $j$.
Note that $f$ is bijective, as the set of constraints
onto the $x_i$'s and $y_i$'s are the same.
The mapping $f$ is continuous and rational by definition.
The inverse mapping $f^{-1}: W \rightarrow V$
is simply given by

\[\left(
\begin{array}{cc}
     y_1 \ldots y_n &  \alpha_1 \ldots \alpha_n\\
     \beta_1 \ldots \beta_n &   \gamma_1 \ldots \gamma_n \\
     \delta_1 \ldots \delta_n &   
     \varepsilon_1 \ldots \varepsilon_n 
\end{array}
\right)
\mapsto (y_1,\ldots,y_n).
\]
This is also a continuous and rational mapping.
Note that the number of constraints and variables of \A{} is
linear in the number of variables and constraints of \S{}.
This finishes the proof.
\end{proof}

%%%%%%%%%%%%%%%%%%%%%%%%%%%%%%%%%%%%%%%%%
\section{Building the polytopes}
\label{sec:PolytopeBuilding}
%%%%%%%%%%%%%%%%%%%%%%%%%%%%%%%%%%%%%%%%%

The main goal of this section is to prove the following lemma.
\begin{lemma}\label{lem:nestedLem}
Let $\mathcal{A}$ be an {\em ETR-INV}-array of size $m\times n$. There exists convex polytopes $A\subset B \subset \mathbb{R}^{2+n+m}$ such that there exists a nested polytope $A\subset X \subset B$ with $k=mn+2m+2$ vertices such that
the solution spaces are rationally-equivalent.
\end{lemma}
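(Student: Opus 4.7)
The plan is to work in the ambient space $\mathbb{R}^{2+m+n}$, thought of as a $2$-dimensional ``base plane'' with coordinates $(u,v)$, together with one row coordinate $r_i$ for each $i\in[m]$ and one column coordinate $c_j$ for each $j\in[n]$. For each cell $\alpha_{i,j}$ of the array, I introduce a line segment $s_{i,j}$ living in the $2$-plane cut out by $r_i=1$, $c_j=1$, $r_{i'}=0$ ($i'\ne i$), $c_{j'}=0$ ($j'\ne j$), with its $u$-coordinate running through $[1/2,2]$ and $v$ held at a fixed height. In any admissible nested polytope $X$, the value $\alpha_{i,j}$ will be read off as the $u$-coordinate of the unique vertex of $X$ lying on $s_{i,j}$.

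The first step is to build $A$ and $B$ so that every $X$ with $A\subseteq X\subseteq B$ must use exactly one vertex on each $s_{i,j}$ plus exactly $2m+2$ further auxiliary vertices, giving the stated total $k=mn+2m+2$. The standard device is used: for each segment $s_{i,j}$, a thin ``slab'' of $B$ surrounds it and the inner polytope $A$ contains a tiny parallel spine whose two endpoints cannot be separated by any face of $B$ that avoids $s_{i,j}$, so at least one vertex of $X$ must sit on $s_{i,j}$. Two additional spines per row contribute the $2m$ row-endpoint vertices, and two global spines contribute the final $2$ vertices. A straightforward matching argument, using that exactly $k$ vertices are available, shows no vertex can be shared between two segments.

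The second and third steps embed the two constraint families, exploiting the fact that row $i$'s gadget lives in the slice $\{r_i=1,\, r_{i'}=0\}$ and column $k$'s gadget in the slice $\{c_k=1,\, c_{k'}=0\}$. For the linear row constraint $\alpha_{i,j}+\alpha_{i,k}+\alpha_{i,l}=5/2$, the three chosen cell vertices sit at $u e_0+e_j$, $u e_0+e_k$, $u e_0+e_l$ inside the row slice (with $u$ equal to the corresponding variable value), and a single vertex of $A$ placed at $\tfrac{5}{6}e_0+\tfrac{1}{3}(e_j+e_k+e_l)$ forces these three $u$-values to average $5/6$; the two-term variant is analogous with a midpoint. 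For the inversion constraint $\alpha_{i,k}\cdot\alpha_{j,k}=1$, I use the projective gadget inside the column slice: points of the form $(x,1)$ and $(1,y)$ are collinear with the origin iff $xy=1$, which is realized by arranging an edge of $A$ that is visible only when the two cell vertices satisfy the inversion relation.

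The main obstacle, and the part requiring the most care, is to ensure that these three mechanisms (segment-forcing, linear row gadgets, inversion column gadgets) do not accidentally interact: a vertex of $X$ placed for one row's linear constraint could in principle affect another row or a column gadget. This is handled by a direct-sum argument. Because each row gadget is supported entirely in the slice $r_i=1$ with all other $r_{i'}=0$, and each column gadget in the slice $c_k=1$ with all other $c_{k'}=0$, the only points of $\mathbb{R}^{2+m+n}$ that lie in both a row-$i$ and a column-$k$ slice are exactly the points of the segment $s_{i,k}$; thus the row and column constraints share only the cell vertices and interact only through them, which is exactly the desired behaviour. Once this non-interference is established, verifying that the $mn$-tuple of $u$-coordinates of the cell vertices gives a bijection $V(I)\to V(\mathcal{A})$, with both the forward map and the inverse (which recovers the uniquely determined positions of the $2m+2$ auxiliary vertices) described by rational functions, is a matter of bookkeeping and yields the required rational equivalence.
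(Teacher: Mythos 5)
Your construction has a genuine structural gap in the inversion gadget. You place every cell segment $s_{i,j}$ so that its variable coordinate runs in the \emph{same} direction $u$ of the 2-dimensional ``base plane.'' For the row (addition) gadget this is fine: the three cell vertices sit at $\alpha_{i,j}e_0+e_j$, $\alpha_{i,k}e_0+e_k$, $\alpha_{i,l}e_0+e_l$, the column coordinates force the barycentric weights to be $1/3$ each, and the $u$-coordinate of the image point is the average of the three $\alpha$-values --- a linear relation, as desired. But then consider the column constraint $\alpha_{i,k}\cdot\alpha_{j,k}=1$. The two relevant vertices $x_{i,k}$ and $x_{j,k}$ both have their variable in the $u$-direction, so they lie on two \emph{parallel} segments. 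Any point of $\conv(\{x_{i,k},x_{j,k}\})$ with prescribed row coordinates pins down a fixed convex weight $\lambda$, and its $u$-coordinate is then $\lambda\alpha_{i,k}+(1-\lambda)\alpha_{j,k}$: always a fixed \emph{linear} combination, never a product. The projective trick you quote, ``$(x,1)$ and $(1,y)$ collinear with the origin iff $xy=1$,'' requires the two points to live on two \emph{transversal} lines; you cannot realize it when the two supporting segments are parallel. Since only two base-plane directions $(u,v)$ are available, you also cannot fix this by mixing directions: each segment $s_{i,k}$ must simultaneously be parallel to its row-mates (for the addition gadget) and non-parallel to its column-mates (for the inversion gadget), which is impossible inside a single 2-plane.

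The paper's construction resolves exactly this tension. It gives each row its own direction $g_i$: the segment $\sigma_{i,j}$ runs in direction $g_i$, so that segments in the same row $i$ (but different columns) are parallel, while segments in the same column $j$ form an ``orthogonal frame'' $V_j$ --- they share a common endpoint $v_{0,j}$ and point in mutually orthogonal directions $g_1,\dots,g_m$. Parallelism gives Observation~\ref{linear} (the barycentric/linear gadget on a prism face), and orthogonality plus the shared corner gives Observation~\ref{quadratic} (the inversion gadget on a triangular face). Your proposal never introduces per-row directions and so cannot host both gadgets simultaneously. The rest of your outline (segment-forcing via a point that must be ``covered'' from each side, the non-interference/direct-sum discussion, and the bookkeeping for rational equivalence) is in the right spirit and roughly parallels the paper's use of the auxiliary frames $U_1,U_2$ with the points $y_{i,j},z_{i,j}$, but the coordinate scheme must be repaired before those steps can go through.
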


\begin{remark} The polytopes $A\subset B$ in Lemma \ref{lem:nestedLem} are actually contained in a hyperplane in $\mathbb{R}^{2+n+m}$, and are $(n+m+1)$-dimensional. The outer polytope $B$ will have $(n+2)(m+1)$ vertices and is defined by $n+m+3$ hyperplanes. The vertex description and facet description of the outer polytope will be given in Subsection  \ref{coord outer poly}.

The inner polytope $A$ has $2m+2$ vertices in common with the outer polytope $B$ together with an additional $2mn$ vertices that lie on certain 2-faces of the outer polytope. Finally, for each equation in the ETR-INV-array $\mathcal{A}$ we add one additional vertex to the inner polytope which will lie on certain faces of the outer polytope. The vertex description of the inner polytope $A$ is given in Subsection \ref{coord inner poly}.
\end{remark}

\subsection{Two geometric observations}

Here we state two simple geometric observations that are used for the ``gadgets'' needed in our construction of the polytopes of 
Lemma~\ref{lem:nestedLem}.

\subsubsection{The linear equations}
Let $\{v_0, v_1, \dots, v_k\}$ be a set of affinely independent points in $\mathbb{R}^d$. For $1\leq i \leq k$ let $w_i = v_i+v_0$ and define the prism $P$ as \[P= \conv(\{v_1, \dots, v_k, w_1, \dots, w_k\}).\]
For $t\in [0,1]$ define the point $q_t\in P$ as
\[q_t  = (1-t)(\tfrac{1}{k}v_1 + \cdots + \tfrac{1}{k}v_k) + t(\tfrac{1}{k}w_1 + \cdots + \tfrac{1}{k}w_k) 
= \tfrac{1}{k}v_1 + \cdots + \tfrac{1}{k}v_k + tv_0.\]
Finally, for $1\leq i \leq k$ define points $p_i$ as \[p_i = (1-\lambda_i)v_i + \lambda_iw_i = v_i + \lambda_i v_0,\] where $\lambda_i\in [0,1]$. A simple calculation (left to the reader) gives us the following.

\begin{obs} \label{linear}
$q_t\in \conv(\{p_1, \dots, p_k\})$ if and only if $\sum_{i=1}^k\lambda_i = tk$.
\end{obs}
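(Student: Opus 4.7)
The plan is to prove both directions of the ``iff'' by substituting the explicit formulas for $q_t$ and $p_i$ and equating coefficients. I expect the backward direction to be essentially a one-line substitution, while the forward direction needs careful use of affine independence of $\{v_0,v_1,\ldots,v_k\}$.

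For the backward direction, I would assume $\sum_{i=1}^k \lambda_i = tk$ and simply take $\mu_i = 1/k$ for each $i$. These are non-negative and sum to $1$, and a direct computation gives
\[\sum_{i=1}^k \mu_i p_i \;=\; \tfrac{1}{k}\sum_i v_i + \tfrac{1}{k}\Bigl(\sum_i \lambda_i\Bigr) v_0 \;=\; \tfrac{1}{k}\sum_i v_i + t\, v_0 \;=\; q_t,\]
exhibiting $q_t$ as a convex combination of the $p_i$.

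For the forward direction, I would start from an arbitrary convex combination $q_t = \sum_i \mu_i p_i$ with $\mu_i\geq 0$ and $\sum_i \mu_i = 1$. Expanding $p_i = v_i + \lambda_i v_0$ gives $\sum_i\mu_i v_i + (\sum_i \mu_i\lambda_i)v_0$, which, equated with the standard expression $q_t = \tfrac{1}{k}\sum_i v_i + tv_0$, produces the vanishing linear combination
\[\sum_{i=1}^k \Bigl(\mu_i - \tfrac{1}{k}\Bigr) v_i \;+\; \Bigl(\sum_i \mu_i\lambda_i - t\Bigr) v_0 \;=\; 0.\]
Using $\sum_i \mu_i = 1$, the sum of the coefficients of this relation equals $\sum_i \mu_i\lambda_i - t$. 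If I can show that every coefficient vanishes, I immediately obtain $\mu_i = 1/k$ and $\sum_i \mu_i \lambda_i = t$, and rearranging yields $\sum_i \lambda_i = tk$.

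The hardest part will be justifying that last step, since affine independence of $\{v_0,\ldots,v_k\}$ only guarantees that vanishing linear combinations with \emph{zero} coefficient sum are trivial, whereas the sum here is the very quantity I want to bound. To handle this I would pass to the lift $v_i\mapsto (v_i,1)\in\R^{d+1}$, where affine independence becomes ordinary linear independence, and observe that the space of linear relations among the $v_i$ in $\R^d$ is therefore at most one-dimensional. Any non-trivial relation is a scalar multiple of a distinguished one $\sum_i b_i v_i = 0$ with $\sum_i b_i = 1$, say with scalar $\sigma$; writing $\mu_i - 1/k = \sigma b_i$ and imposing $\sum_i \mu_i = 1$ forces $\sigma(1-b_0) = 0$. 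The implicit non-degeneracy of the prism — namely, that $v_0$ does not lie in the linear span of the differences $v_i-v_j$, since otherwise two vertices $w_i$, $v_j$ would coincide — rules out $b_0 = 1$, so $\sigma = 0$ and the proof closes.
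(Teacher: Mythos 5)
Your backward direction is correct and is the natural one. Your forward direction correctly identifies the genuine subtlety that a casual reader might miss: the vanishing linear combination $\sum_{i=1}^k(\mu_i - \tfrac1k)v_i + (\sum_i\mu_i\lambda_i - t)v_0 = 0$ has coefficient sum $\sum_i\mu_i\lambda_i - t$, which is exactly the quantity in question, so affine independence alone does not immediately force triviality. Your resolution via the at-most-one-dimensional relation space and the identity $\sigma(1-b_0)=0$ is also correct.

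However, the final justification for ruling out $b_0 = 1$ is wrong as stated. You claim that $v_0 \in \mathrm{span}(v_i - v_j)$ would force two of the points $w_i, v_j$ to coincide, but this is false in general: take $v_1 = e_1$, $v_2 = e_2$, $v_3 = e_3$ in $\R^3$ and $v_0 = \tfrac12(v_1 - v_2)$. These four points are affinely independent and $v_0 \in \mathrm{span}(v_i - v_j)$, yet all six points $v_1,v_2,v_3,w_1,w_2,w_3$ are distinct. What actually goes wrong when $b_0 = 1$ is that $v_0$ lies in the direction space of $\mathrm{aff}(v_1,\dots,v_k)$, so $P$ has affine dimension $k-1$ rather than $k$; it is not a genuine prism over the $(k-1)$-simplex $\conv(v_1,\dots,v_k)$. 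You should also be explicit that this means the observation as literally stated --- with only affine independence as a hypothesis --- is in fact false: in the example above (or already with $k=2$, $v_0 = v_1 - v_2$) one can pick $\lambda_i$ and $t$ with $q_t \in \conv(p_1,\dots,p_k)$ but $\sum_i\lambda_i \neq tk$. The extra hypothesis $v_0 \notin \mathrm{span}(v_i-v_j)$ is what is needed, and it does hold in every application in the paper, since there $v_0 = 3g_i$ while $\mathrm{span}(v_i - v_j)$ lies in the coordinate subspace spanned by the $f_j$'s, which is disjoint from the $g$-coordinates. Flagging this gap in the stated hypothesis, and giving the correct geometric reason $b_0 \neq 1$ holds, would make the argument airtight. (The paper itself leaves this observation unproved as a "simple calculation", so there is no proof to compare against; your lift-to-$\R^{d+1}$ argument is a clean way to organize it once the hypothesis is repaired.)
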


\subsubsection{The quadratic equation} In the plane $\mathbb{R}^2$, let $p_1 = (\alpha_1,-1)$ be a point on the line $y=-1$ and let $p_2 = (-1,\alpha_2)$ be a point on the line $x=-1$, where $\alpha_1, \alpha_2\in [\frac{1}{2}, 2]$. A simple calculation (left to the reader) gives us the following.

\begin{obs} \label{quadratic}
The origin $(0,0) \in \conv(\{p_1,p_2\})$ if and only if $\alpha_1\cdot \alpha_2 = 1$.
\end{obs}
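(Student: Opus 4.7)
The plan is to parametrize the segment $\conv(\{p_1,p_2\})$ by a convex combination $(1-t)p_1 + t p_2$ for $t \in [0,1]$, and simply solve for when this equals $(0,0)$. Since both $\alpha_1$ and $\alpha_2$ lie in $[\tfrac{1}{2},2]$, they are strictly positive, which makes the algebra clean and avoids degenerate cases.

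For the forward direction, suppose the origin lies on the segment, so there exists $t\in[0,1]$ with $(1-t)\alpha_1 - t = 0$ from the $x$-coordinate and $-(1-t) + t\alpha_2 = 0$ from the $y$-coordinate. The first gives $\alpha_1 = t/(1-t)$ and the second gives $\alpha_2 = (1-t)/t$ (both $t=0$ and $t=1$ are ruled out by positivity of $\alpha_1,\alpha_2$), so multiplying yields $\alpha_1\alpha_2 = 1$. Conversely, assuming $\alpha_1\alpha_2 = 1$, I would set $t = \alpha_1/(1+\alpha_1)\in(0,1)$ and verify by direct substitution that both coordinates of $(1-t)p_1 + t p_2$ vanish; the identity $\alpha_2 = 1/\alpha_1$ is used exactly once to cancel $\alpha_1$ in the $y$-coordinate computation.

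There is no real obstacle here — the observation is a two-line computation once the convex combination is written out. The only thing worth flagging is that the hypothesis $\alpha_1,\alpha_2>0$ (which is free since $\alpha_i\in[\tfrac{1}{2},2]$) is essential: without it, both $t=0$ and $t=1$ would have to be handled separately, and the equivalence would fail at $\alpha_i=0$. Since the authors explicitly leave this to the reader, a one-paragraph proof of the form above is all that is needed.
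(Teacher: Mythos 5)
Your proof is correct and is exactly the direct computation the paper has in mind — the authors state the observation with ``A simple calculation (left to the reader)'' and give no proof of their own, so there is no alternative argument to compare against. Writing out the convex combination $(1-t)p_1 + tp_2$, solving the two coordinate equations for $t$, and multiplying is the natural route, and your converse with $t = \alpha_1/(1+\alpha_1)$ checks out. One small nitpick on your closing remark: at $\alpha_1 = 0$ the biconditional itself does not actually fail (both sides are false), it is only your algebraic manipulation that would need a separate case; positivity genuinely matters only to rule out pathologies like $\alpha_1 = \alpha_2 = -1$, where $\alpha_1\alpha_2 = 1$ yet the segment degenerates to a single point off the origin. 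Since $\alpha_i \in [\tfrac12,2]$ anyway, this is moot.
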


\subsection{A basic outline of the construction}
We now give an outline of the construction of the polytopes in Lemma \ref{lem:nestedLem}, without giving explicit coordinates, but rather focusing on the three ``gadgets'' that will be used to encode the three types of equations in $\mathcal{A}$. (We will give precise coordinates in Subsection \ref{sect:explicit}.)

\subsubsection{The outer polytope}
To build the outer polytope $B$ we start with an {\em ``orthogonal frame''} spanning $\mathbb{R}^m$, 
consisting of $m$ mutually orthogonal segments of equal length all meeting in a common endpoint. 
Note that the convex hull of these segments form an $m$-dimensional simplex. 
(When we eventually add coordinates, the length of these segments will be 3 units, 
each one parametrizing the closed interval $[-1, 2]$.) We now take $n+2$ distinct 
copies of the orthogonal frame, $U_1$, $U_2$, $V_1, \dots, V_n$, each one translated 
into ``independent dimensions'' so that their union now lives 
in $\mathbb{R}^{2+n+m}$ (Note that the affine span of the union will be $(n+m+1)$-dimensional.) 
We label the segments of these orthogonal frames as
\[
\begin{array}{ccc}
U_1 &= & \{ \tau_{1,1} , \dots,  \tau_{m,1}   \} \\
U_2 &= & \{ \tau_{1,2} , \dots,  \tau_{m,2}   \} \\
V_1 &= & \{ \sigma_{1,1} , \dots,  \sigma_{m,1}   \} \\
& \vdots & \\
V_n &= & \{ \sigma_{1,n} , \dots,  \sigma_{m,n}   \} 
\end{array}
\]
such that the segments $\tau_{i,1}, \tau_{i,2}, \sigma_{i,1}, \dots, \sigma_{i,n}$ are all parallel. 

We now take the outer polytope $B$ to be the convex hull of $U_1 \cup U_2 \cup V_1 \cup \cdots \cup V_n$. It is straight-forward to show that $B$ is an $n+m+1$-dimensional polytope with $(n+2)(m+1)$ vertices.  
In what follows, for each $1\leq i \leq m$ and $1\leq j \leq n$, 
the ``second half'' of  segment $\sigma_{i,j}$, parametrizing the interval $[\frac{1}{2},2]$,
will correspond to the variable $\alpha_{i,j}$ in the ETR-INV-array $\mathcal{A}$. The segments $\tau_{i,j}$ will play an auxiliary role which we describe next. 

\subsubsection{Building the inner polytope: Enforcing vertices to segments}

The first step in building the inner polytope $A$ is to enforce the following. 

\begin{prpty} \label{seg-restr}
Let $X$ be a nested polytope, with $k = mn+2m+2$ vertices and $A\subset X\subset B$. 
For every $1\leq i \leq m$ and $1\leq j \leq n$, 
the segment $\sigma_{i,j}\in V_j$ contains exactly one vertex  
of $X$, which we denote by $x_{i,j}$. 
\end{prpty}
(More specifically, each segment of the orthogonal frame $V_i$ will contain exactly one vertex 
from $X$ in its ``second half'', thus encoding a value in the 
interval $[\frac{1}{2}, 2]$.) This can be done as follows. 
Fix indices $1\leq i \leq m$ and $1\leq j \leq n$, and 
consider segment $\tau_{i,1}\in U_1$ and its parallel 
copy  $\sigma_{i,j}\in V_j$, which are edges of a $2$-dimensional 
face of the outer polytope $B$. Define the point $y_{i,j}$ to be 
the unique point in this 2-face such that segment  
$\tau_{i,1}\in U_1$ is mapped to the second half of its parallel 
copy $\sigma_{i,j}\in V_j$ by central projection through 
$y_{i,j}$. Similarly, we define the analogous point $z_{i,j}$ 
in the $2$-face of $A$ spanned by the segment  $\tau_{i,2}\in U_2$ 
and its parallel copy $\sigma_{i,j} \in V_j$. (See Figure \ref{fig:ForceSegment}.)

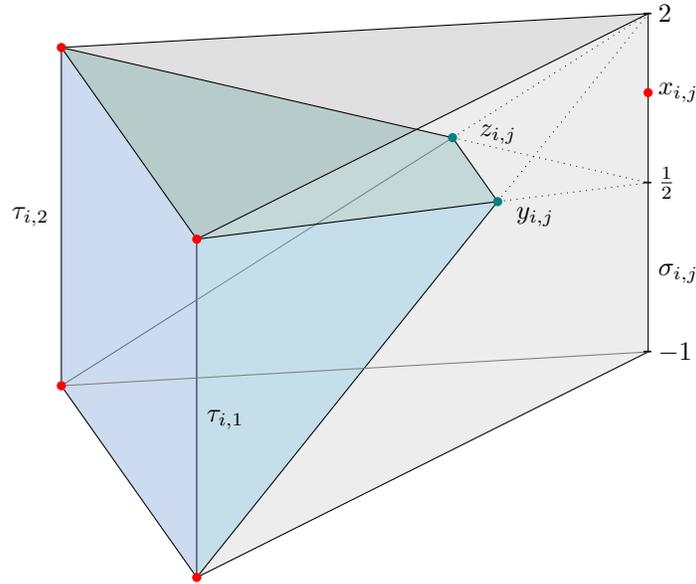
\begin{figure}[ht]
    \centering
\begin{tikzpicture}[scale=1.5]

    \begin{scope}
    \draw (0,-0.5) coordinate(A) -- (0,2.5) coordinate (B);
    \draw (4,1.5) coordinate(C) -- (4,3) coordinate (G) -- (4,4.5) coordinate (D);
    \draw (-1.2,1.2) coordinate(E) -- (-1.2,4.2) coordinate (F);

    \draw[dotted] (intersection of  E--D and F--G) -- (D);
    \draw[dotted] (intersection of  E--D and F--G) -- (G);
    \draw[dotted] (intersection of  A--D and B--G) -- (D);
    \draw[dotted] (intersection of  A--D and B--G) -- (G);

    \draw[white!50!black] (E) -- (intersection of  E--D and F--G);
    \fill[cyan, opacity = .2] (A) -- (B) -- (intersection of  A--D and B--G);
    \fill[teal, opacity = .2] (B) -- (F) -- (intersection of  E--D and F--G) -- (intersection of  A--D and B--G);
    \fill[blue!40!teal, opacity =.2] (A) -- (B) -- (F) --(E);
    \fill[black!60!white, opacity =.2] (B) -- (D) -- (F);
    \fill[black!35!white, opacity =.2] (A) -- (B) -- (D) -- (C);

    \draw (A) -- (intersection of  A--D and B--G) -- (intersection of  E--D and F--G) -- (F) -- (B) -- (intersection of  A--D and B--G); 

    \draw (A) --(E);
    \draw (A) -- (C); 
    \draw[gray] (E) -- (C); 
    \draw (F) -- (D);
    \draw (B) -- (D);
    
    \node[teal,scale=3] at (intersection of  E--D and F--G){.};

    \node[teal,scale=3] at (intersection of  A--D and B--G){.};

\node[red, scale=3] at (4,3.8) {.};

\node[red, scale=3] at (0,-.5) {.};

\node[red, scale=3] at (0,2.5) {.};

\node[red, scale=3] at (-1.2,1.2) {.};

\node[red, scale=3] at (-1.2,4.2) {.};

\node[right] at (4,3.8) {\small $x_{i,j}$};

\node[right] at (0,0.9) {\small $\tau_{i,1}$};
\node[left] at (-1.22,2.7) {\small $\tau_{i,2}$};
\node[right] at (4.0,2.2) {\small $\sigma_{i,j}$};
\node at (3,2.7) {\small $y_{i,j}$};
\node at (2.67,3.43) {\small $z_{i,j}$};

\node at (C) {-};
\node at (G) {-};
\node at (D) {-};

\node[right] at (C) {\small $-1$};
\node[right] at (G) {\small $\frac{1}{2}$};
\node[right] at (D) {\small $2$};

\end{scope}

\end{tikzpicture}
\caption{The vertices of any nested polytope $A\subset X\subset B$ (marked in red) must include the endpoints of segments $\tau_{i,1} \in U_1$ and $\tau_{i,2}\in U_2$, while the last vertex, $x_{i,j}$, must be contained in the segment  $\sigma_{i,j}\in V_j$ restricted to the interval $[\frac{1}{2}, 2]$. }
\label{fig:ForceSegment}
\end{figure}

At this stage of the construction the inner polytope $A$ will consist of the orthogonal frames $U_1$ and $U_2$ together with the points $\{y_{i,j}, z_{i,j}\}$ for all $1\leq i \leq m$ and $1\leq j \leq n$. Moreover, if $X$ is a nested polytope, with $mn+2m+2$ vertices and $A\subset X\subset B$, then $X$ must contain the orthogonal frames $U_1$ and $U_2$.
This accounts for $2m+2$ of the vertices, as all $\tau_{i,1}$ (the same for the $\tau_{i,2}$.) have one end point in common.
Futermore, $X$ must contain one vertex in each of the segments of the orthogonal frames $V_1, \dots, V_n$.
This accounts for the remaining $m\cdot n$ vertices. 
Thus Property \ref{seg-restr} is satisfied, and we let $x_{i,j}$ 
denote the unique vertex of $X$ which is contained in the (second half of the) 
segment $\sigma_{i,j}\in V_j$, which we associate with  
the variable $\alpha_{i,j} \in [\frac{1}{2},2]$.

\subsubsection{Building the inner polytope: Encoding $\alpha_{i, j} + \alpha_{i,k} = \frac{5}{2}$} 

In order to enforce the relation $\alpha_{i,j} + \alpha_{i,k} = \frac{5}{2}$, we add a new vertex $p_{i,j,k}$ to the inner polytope $A$ as follows. 
We consider the rectangular 2-face of the outer polytope $B$ spannced by the segements $\sigma_{i,j}\in V_j$ and $\sigma_{i,k}\in V_k$. Define $p_{i,j,k}$ to be the point in this 2-face such that $p_{i,j,k}$ is contained in the convex hull of the vertices $x_{i,j}$ and $x_{i,k}$ of the nested polytope $X$ (satisfying Property \ref{seg-restr}) if and only if the associated variables $\alpha_{i,j} + \alpha_{i,k} = \frac{5}{2}$. (The unique point $p_{i,j,k}$ exists by Observation \ref{linear} by letting $\{v_1,w_1\}$ be the endpoints of $\sigma_{i,j}$ and $\{v_2,w_2\}$ be the endpoints of $\sigma_{i,k}$. See Figure \ref{fig:2}.)

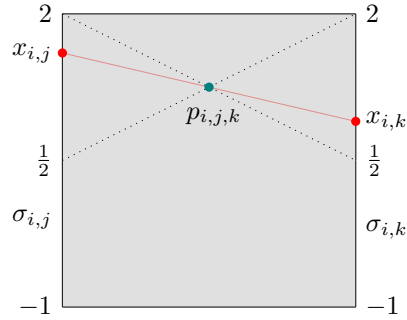
\begin{figure}[ht]
    \centering
    \begin{tikzpicture}[scale =1.3]
        \begin{scope}
            \draw (0,-1) coordinate (A) -- (0,0.5) coordinate (B);
            \draw (B) -- (0,2) coordinate (C);
            \draw (3,-1) coordinate (D) -- (3,0.5) coordinate (E);
            \draw (E) -- (3,2) coordinate (F);
            \draw (A) -- (D);
            \draw (C) -- (F);
            \draw[red, opacity = .4] (0,1.6) -- (3,0.9);
            \fill[black!60!white, opacity =.2] (A) -- (C) -- (F) -- (D);
            \draw[thin, dotted] (B) -- (F) (C) -- (E);
            \node[teal,scale=3] at (intersection of  B--F and C--E){.};
            \node[left] at (0,-0.1) {\small $\sigma_{i,j}$};
            \node[red, scale=3] at (3,0.9) {.};
            \node[right] at (3,0.9) {\small $x_{i,k}$};
            
            \node[right] at (3,-0.2) {\small $\sigma_{i,k}$};
            \node[below] at (1.55,1.15) {\small $p_{i,j,k}$};
            \node[red, scale=3] at (0,1.6) {.};
            \node[left] at (0,1.6) {\small $x_{i,j}$};

            \node[left] at (A) {\small $-1$};
            \node[left] at (B) {\small $\frac{1}{2}$};
            \node[left] at (C) {\small $2$};
            \node[right] at (D) {\small $-1$};
            \node[right] at (E) {\small $\frac{1}{2}$};
            \node[right] at (F) {\small $2$};
        \end{scope}
    \end{tikzpicture}
    \caption{The vertices $x_{i,j}$ and $x_{i,k}$ contain the point $p_{i,j,k}$ in their convex hull if and only if the associated variables satisfy the equation $\alpha_{i,j} + \alpha_{i,k} = \frac{5}{2}$}
    \label{fig:2}
\end{figure}

 Let us now add the vertex $p_{i,j,k}$ to $A$, and consider a nested polytope $X$ satisfying Property \ref{seg-restr}. It is  easily shown that the only possible way that $X$ can contain the point $p_{i,j,k}$, is if this point is contained in $\conv(\{x_{i,j}, x_{i,k}\})$. In other words, if $X$ satisfies Property \ref{seg-restr}, then the associated variables satisfy the equation $\alpha_{i,j}  + \alpha_{i,k} = \frac{5}{2}$.

\subsubsection{Building the inner polytope: Encoding $\alpha_{i,j}+\alpha_{i,k} + \alpha_{i,l}= \frac{5}{2}$}

Enforcing the relation $\alpha_{i,j} + \alpha_{i,k} + \alpha_{i,l} = \frac{5}{2}$ is similar to the previous case,  and we add a new vertex $q_{i,j,k,l}$ to the inner polytope $A$ as follows.
We consider the triangluar prism spanned by the segments $\sigma_{i,j}\in V_j$, $\sigma_{i,k}\in V_k$, and $\sigma_{i,l}\in V_l$, which is a 3-face of the outer polytope $B$. 

Define $q_{i,j,k, l}$ to be the point in this 3-face such that $q_{i,j,k, l}$ is contained in the convex hull of the vertices $x_{i,j}$, $x_{i,k}$, and $x_{i,l}$ of the nested polytope $X$ (satisfying Property \ref{seg-restr}) if and only if the associated variables $\alpha_{i,j} + \alpha_{i,k} +\alpha_{i,l} = \frac{5}{2}$. (The unique point $q_{i,j,k, l}$ exists by Observation \ref{linear} by letting $\{v_1,w_1\}$ be the endpoints of $\sigma_{i,j}$,  $\{v_2,w_2\}$ be the endpoints of $\sigma_{i,k}$, and $\{v_3,w_3\}$ be the endpoints of $\sigma_{i,l}$. See Figure \ref{fig:Addition}.)

\begin{figure}[ht]
    \centering
    \begin{tikzpicture}[scale=1.6]
        \begin{scope}
        \draw (0,-1) coordinate (A) -- (0,1/2) coordinate (B);
        \draw (B) -- (0,2) coordinate (C);
        \draw (2.4,-2) coordinate (D) -- (2.4,-1/2) coordinate (E);
        \draw (E) -- (2.4,1) coordinate (F);
        \draw (4,-1) coordinate (G) -- (4,1/2) coordinate (H);
        \draw (H) -- (4,2) coordinate (I);

        \fill[white!30!black, opacity =.2] (A) -- (C) -- (F) -- (D);        
        \fill[white!70!black, opacity =.2] (F) -- (D) -- (G) -- (I);        
        \fill[white!50!black, opacity =.2] (C) -- (F) -- (I);        

        \draw (A) -- (D) -- (G);
        \draw (C) -- (F) -- (I) --cycle;
        \draw[gray] (A) -- (G) ;
        
        \node[red, scale =3] at (0,1.2) {.};
        \node[left] at (0,1.2) {\small $x_{i,j}$};
        
        \node[red, scale =3] at (2.4,0.1) {.};
                \node[right] at (2.4,0) {\small $x_{i,k}$};

        \node[red, scale =3] at (4,0.9) {.};

        \node[right] at (4,0.9) {\small $x_{i,l}$};

        \node[teal, scale=3] at (2,0.7) {.}; 
        
        \fill[red, opacity = .14] (0,1.2)--(2.4,0.1)--(4,0.9)--cycle; 

        \node at (A) {-}; 
        \node at (B) {-}; 
        \node at (C) {-}; 

        \node[left] at (A) {\small $-1$}; 
        \node[left] at (B) {\small $\frac{1}{2}$}; 
        \node[left] at (C) {\small $2$}; 

        \node[right] at (D) {\small $-1$}; 
        \node[right] at (E) {\small $\frac{1}{2}$}; 
        \node[right] at (2.4,0.92) {\small $2$}; 

        \node at (D) {-}; 
        \node at (E) {-}; 
        \node at (F) {-}; 

        \node at (G) {-}; 
        \node at (H) {-}; 
        \node at (I) {-}; 

        \node[right] at (G) {\small $-1$}; 
        \node[right] at (H) {\small $\frac{1}{2}$}; 
        \node[right] at (I) {\small $2$}; 

\node at (1.6,0.76) {\small $q_{i,j,k,l}$};

\node[left] at (0,-.2) {\small $\sigma_{i,j}$};

\node[left] at (2.4,-1.2) {\small $\sigma_{i,k}$};

\node[right] at (4,-.1) {\small $\sigma_{i,l}$};

        \end{scope}
    \end{tikzpicture}
    \caption{The vertices $x_{i,j}$, $x_{i,k}$, and $x_{i,l}$ contain the point $q_{i,j,k,l}$ if and only if the associated variables satisfy the equation $\alpha_{i,j} +\alpha_{i,k}+ \alpha_{i,l} = \frac{5}{2}$.}
    \label{fig:Addition}
\end{figure}
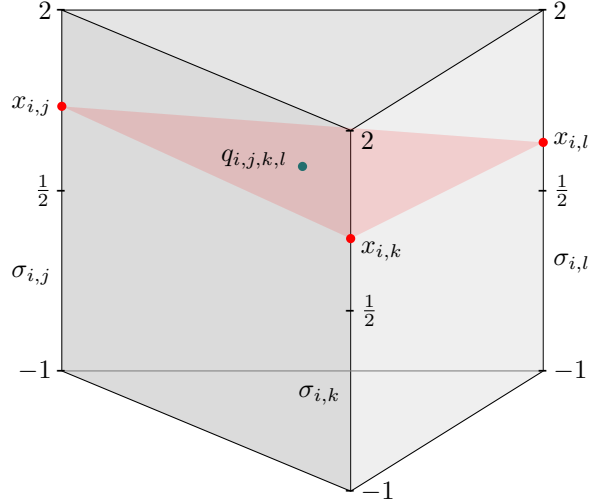

Let us now add the vertex $q_{i,j,k,l}$ to $A$, and consider a nested polytope $X$ satisfying Property \ref{seg-restr}. It is easily seen that the only possible way that $X$ can contain the point $q_{i,j,k,l}$ is if this point is contained in $\conv(\{x_{i,j}, x_{i,k}, x_{i,l}\})$. In other words, if $X$ satisfies Property \ref{seg-restr}, then the associated variables satisfy the equation $\alpha_{i,j}+ \alpha_{i,k} + \alpha_{i,l} = \frac{5}{2}$.

\subsubsection{Building the inner polytope: Encoding $\alpha_{i,k}\cdot \alpha_{j,k}=1$}

In order to enforce the relation $\alpha_{i,k} \cdot \alpha_{j,k} = 1$ we add a new vertex $r_{i,j,k}$ to the inner polytope $A$ as follows. 
Consider the triangular 2-face of $B$ spanned by segments $\sigma_{i,k}\in V_k$ and $\sigma_{j,k}\in V_k$. 
Note that the two segments belong to the same orthogonal frame and thus 
share an endpoint and are orthogonal to one another by definition.
We can coordinatize the plane containing this 2-face such that the segment $\sigma_{i,k}$ 
is parametrized by $\{(x,-1) : -1\leq x \leq 2\}$ and 
the segment $\sigma_{j,k}$ is parametrized by $\{(-1,y) : 1\leq y \leq -2\}$. 
We then define $r_{i,j,k}$ to be the origin with respect to this coordinate system. 
It follows from Observation \ref{quadratic} that the vertices $x_{i,k}$ and $x_{j,k}$ 
contain the point $r_{i,j,k}$ in their convex hull if and only 
if the associated coordinates satisfy the equation 
$\alpha_{i,k}\cdot \alpha_{j,k} = 1$. (See Figure \ref{fig:4}.)

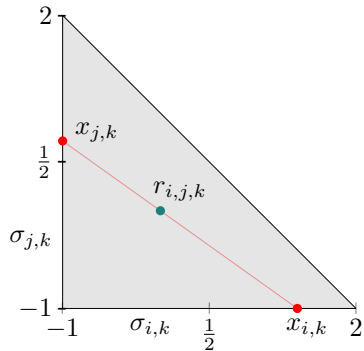
\begin{figure}[ht]
    \centering
    \begin{tikzpicture}
        \begin{scope}[scale=1.3]
            \draw (-1,-1) coordinate(A) -- (2,-1) coordinate(B);
            \draw (-1,-1) coordinate(C) -- (-1,2) coordinate(D);
            \draw (2,-1) -- (-1,2);
            \draw[red, opacity = .4] (-1,5/7) -- (7/5,-1);
            \node[teal, scale=3] at (0,0) {.};
            \fill[white!50!black, opacity=.2] (-1,-1) -- (2,-1) -- (-1,2);
            \node[scale = .4] at (A) {$|$};
            \node[scale = .4] at (1/2,-1) {$|$};
            \node[scale = .4] at (B) {$|$};
            \node[below] at (A) {\small $-1$};
            \node[below] at (1/2,-1) {\small $\frac{1}{2}$};
            \node[below] at (B) {\small $2$};
            \node at (C) {-};
            \node at (-1,1/2) {-};
            \node at (D) {-};
            \node[left] at (C) {\small $-1$};
            \node[left] at (-1,1/2) {\small $\frac{1}{2}$};
            \node[left] at (D) {\small $2$};
            \node[red, scale = 3] at (-1,5/7) {.};
            \node[red, scale = 3] at (7/5,-1) {.};
            \node[above] at (0.2,0) {\small $r_{i,j,k}$};
            \node[above] at (-.65,.6) {\small $x_{j,k}$};
            \node[below] at (1.5,-1) {\small ${x_{i,k}}$};
            \node[left] at (-1,-.3) {\small $\sigma_{j,k}$};
            \node[below] at (-.1
        ,-1) {\small $\sigma_{i,k}$};
        \end{scope}
    \end{tikzpicture}
    \caption{The vertices $x_{i,k}$ and $x_{j,k}$ contain the point $r_{i,j,k}$ if and only if the associated variables satisfy the equation $\alpha_{i,k}\cdot \alpha_{j,k}=1$.}
    \label{fig:4}
\end{figure}

Let us now add the point $r_{i,j,k}$ to $A$, and consider a nested polytope $X$ satisfying Property \ref{seg-restr}. As before, it is easily shown that the only way that $X$ can contain the point $r_{i,j,k}$ is if this point is contained in $\conv(\{x_{i,k}, x_{j,k}\})$. In other words,  if $X$ satisfies Property \ref{seg-restr}, then the associated variables satisfy the equation $\alpha_{i,k}\cdot \alpha_{j,k} = 1$.

\subsection{Explicit coordinates} \label{sect:explicit}

We now give the explicit coordinates to the construction in the previous section. 
Let $\{e_1, e_2,f_1 \dots,f_n,g_1, \dots, g_m\}$ denote the standard basis in $\mathbb{R}^{2+n+m}$ and set \[J=\sum_{j=1}^{m} g_j.\]

\subsubsection{The outer polytope $B$} \label{coord outer poly}

We start by giving the vertices of the outer polytope $B$. First define 
\[u_{0,1} = e_{1}-J =  (1,0,\underbrace{0,\dots,0}_n,\underbrace{-1,\dots,-1}_m),\]
and for $1\leq i \leq m$, let
\[u_{i,1} = e_1 +3g_{i} -J = (1,0,\underbrace{0,\dots,0}_n,\underbrace{-1,\dots, \overset{i}{2}, \dots,-1}_m).\]
This defines the orthogonal frame $U_1 = \{\tau_{1,1}, \dots, \tau_{m,1}\}$ by setting the segment $\tau_{i,1} = \conv(\{u_{0,1}, u_{i,1}\})$.

Similarly, define
\[u_{0,2} = e_{2}-J =  (0,1,\underbrace{0,\dots,0}_n,\underbrace{-1,\dots,-1}_m),\]
and for $1\leq i \leq m$, let
\[u_{i,2} = e_2 +3g_{i} -J = (0,1,\underbrace{0,\dots,0}_n,\underbrace{-1,\dots, \overset{i}{2}, \dots,-1}_m).\]
This defines the orthogonal frame $U_2 = \{\tau_{1,2}, \dots, \tau_{m,2}\}$ by setting the segment $\tau_{i,2} = \conv(\{u_{0,2}, u_{i,2}\})$.

\medskip

Next we define the orthogonal frames $V_1, \dots, V_n$.
For $1\leq j \leq n$, let
\[v_{0,j} = f_j - J =  (0,0,\underbrace{0,\dots,\overset{j}{1},\dots,0}_n,\underbrace{-1,\dots,-1}_m),\]
and for $1\leq i \leq m$, $1\leq j \leq n$, let
\[v_{i,j} = f_j + 3g_{i} - J = 
(0,0,\underbrace{0,\dots,\overset{j}{1},\dots,0}_n,\underbrace{-1,\dots,\overset{i}{2},\dots,-1}_m).\]
For every $1\leq j \leq n$ this defines the orthogonal frame $V_j=\{\sigma_{1,j},\dots, \sigma_{m,j}\}$ by setting $\sigma_{i,j} = \conv(\{v_{0,j}, v_{i,j}\})$.

Finally, for $0\leq i \leq m$, $1\leq j \leq n$, and $k=1,2$ set
$U = \{u_{i,k}\}$ and $V = \{v_{i,j}\}$.
The outer polytope $B$ is now defined as $B = \conv(U \cup V)$. 
Equivalently, $B$ is the set of $x \in \mathbb{R}^{2+n+m}$ in the affine hyperplane 
\[ \left\langle (e_1+e_2+f_1+\dots +f_n), x \right\rangle = 1, \]
that satisfy the following $(2+n+m+1)$ linear constraints,
\[ 
\langle e_i,x \rangle \geq 0, \quad  
\langle f_i,x \rangle \geq 0, \quad  
\langle g_j,x \rangle \geq -1, \quad  
\langle J,x \rangle \leq 3-m.  
\]
Observe that $B$ is a $(n+m+1)$-dimensional convex polytope with vertex set $U\cup V$ and facets defined by the above constraints.
Furthermore, $U \cup V$ is a subset of the vertices of the axis-aligned box $[0,1]^{2+n}\times [-1,2]^m \subset \mathbb{R}^{2+n+m}$, and therefore $U\cup V$ is in convex position and form the vertices of $B$. To see that $B$ is $(n+m+1)$-dimensional, we simply note that \[\{u_{0,1},u_{0,2}, v_{0,1}, \dots, v_{0,n}, v_{1,1}, \dots, v_{1,m}\}\] is an affinely independent set of size $2+n+m$.

\subsubsection{The inner polytope $A$} \label{coord inner poly}
We now define vertices of the inner polytope $A$. These will consist of the points $U$, defined above, together with some additional points. 

For $1\leq i \leq m$ and $1\leq j \leq n$ define the points $y_{i,j}$ and $z_{i,j}$ that ensure Property \ref{seg-restr} are defined as 
\[y_{i,j} = \tfrac{1}{3}e_1 + \tfrac{2}{3}f_j + 2g_i -J =
(\tfrac{1}{3},0,\underbrace{0,\dots,\overset{j}{\tfrac{2}{3}},\dots,0}_n,\underbrace{-1,\dots,\overset{i}{1},\dots,-1}_m),\] and
\[z_{i,j} = \tfrac{1}{3}e_2 + \tfrac{2}{3}f_j + 2g_i -J =
(0,\tfrac{1}{3},\underbrace{0,\dots,\overset{j}{\tfrac{2}{3}},\dots,0}_n,\underbrace{-1,\dots,\overset{i}{1},\dots,-1}_m).\]

The point $p_{i,j,k}$ which encodes the equation $\alpha_{i,j}+\alpha_{i,k} = \frac{5}{2}$ is defined as
\[p_{i,j,k} = \tfrac{1}{2}f_j + \tfrac{1}{2}f_k+\tfrac{9}{4}g_i -J = (0,0,\underbrace{0,\dots,\overset{j}{\tfrac{1}{2}},\dots, \overset{k}{\tfrac{1}{2}}, \dots,0}_n,\underbrace{-1,\dots,\overset{i}{\tfrac{5}{4}},\dots,-1}_m).\]

The point $q_{i,j,k,l}$ which encodes the equation $\alpha_{i,j}+\alpha_{i,k} +\alpha_{i,l} = \frac{5}{2}$ is defined as 
\[q_{i,j,k,l} = \tfrac{1}{3}f_{j} + \tfrac{1}{3}f_{k} + \tfrac{1}{3}f_{l} + \tfrac{11}{6}g_{i}-J = 
(0,0,\underbrace{0,\dots,\overset{j}{\tfrac{1}{3}},\dots, \overset{k}{\tfrac{1}{3}}, \dots,\overset{l}{\tfrac{1}{3}}, \dots,0}_n,\underbrace{-1,\dots,\overset{i}{\tfrac{5}{6}},\dots,-1}_m).\]

The point $r_{i,j,k}$ which encodes the equation $\alpha_{i,k} \cdot \alpha_{j,k} = 1$ is defined as
\[r_{i,j,k} = f_k + g_i + g_j - J =
(0,0,\underbrace{0,\dots, \overset{k}{1}, \dots, 0}_n,\underbrace{-1,\dots,\overset{i}{0}, \dots,\overset{j}{0}, \dots ,\dots,-1}_m).\]

\subsection{Rational Equivalence}
Now, we describe the mapping $f:V(I) \rightarrow V(J)$.
Let $(\alpha_{i,j})_{(i,j)\in[m][n]}$ be a solution for the \ETRINVS/.
Then vertex $x_{i,j}$ is defined as 
\[v_{0,j} + (1 + \alpha_{i,j})[v_{i,j}- v_{0,j}].\]
All other vertices of the inner polytope are constant.
Thus $f$ is even a linear bijection.

\section{Conclusion}

One of the most compelling open questions is whether 
the extension complexity of a polytope can be computed 
in polynomial time.
It would be nice to get tight parametrized complexity bounds
for the \NPP/ problem.
The best parametrized algorithm for the \NMFshort/ runs in
$(nm)^{O(k^2)}$~\cite{MoitraAlmostOpt,moitra2016almost}. 
And by the exponential time hypothesis there is no 
$(nm)^{o(k)}$ algorithm~\cite{AroraNMFProvably}.
Another interesting direction, is the intermediate simplex
algorithm. We do know that it is NP-hard to compute
an intermediate simplex, but is it solvable in NP time?
At last, we want to point out that we also don't know 
\NP/-membership of \NPP/ for dimension $3$.
In a very recent line of research, two \ER/-hard problems where
shown to ``lie in \NP/'' under the ``lens of smoothed analysis''~\cite{SmoothedOrderType, smoothedArt}. 
It would be interesting to see, if a similar analysis
can be done with the nested polytope problem. 
It would be particular, interesting to see if it is
possible to develop algorithms using IP-solvers, as those
perform extremely well in practice.

\paragraph{Acknowledgement}
We would like to thank Anna Lubiw and Joseph O'Rourke, for 
helping us to find some relevant literature. 
We want to thank Mikkel Abrahamsen, for discussions.
Tillmann Miltzow acknowledges the generous support 
by the ERC Consolidator Grant 615640-ForEFront
and the Veni grant EAGER.

\bibliographystyle{plain}
\bibliography{references}

\newpage
\appendix
\section{Proof of Lemma~\ref{lem:CohenRothblum}}

\CohenReduction*

\begin{proof}
Let $M \in \R^{m\times n}_+$ be a non-negative matrix 
and $k\in \N$ be a number.
We define the outer polytope $B$ as the positive orthant in $\R^m$
intersected with the hyperplane 
\[H = \{x\in \R^{m} : \sum_i x_i = 1\}.\]
Note that $B$ can be specified by $m+1$  hyperplanes.
The inner polytope $A$ is defined as the convex hull
of all the columns of $M$. 
We have to show that there is a nested polytope
on $k$ vertices if and only if $\rank_+(M)\leq k$.
Let $X$ be a polytope with vertices $v_1,\ldots,v_k$
with $A\subseteq X\subseteq B$. 
For each column $c$ of $M$ exists 
$\lambda = (\lambda_1,\ldots,\lambda_k) \in [0,1]^k$
such that $\sum_i \lambda_i v_i = c$.
We define the matrix $V$ by the vectors $v_1,\ldots,v_n$.
For each column $c$ of $M$, we define the corresponding column
$\lambda$ of $W$. By definition $V\cdot W = M$ and both
$V$ and $W$ are non-negative.

For the reverse direction let $X$ be a polytope on $k$
vertices with $A\subseteq X\subseteq B$.
We define the columns of the matrix $V$ using 
the vertices of $X$, i.e., each vertex describes
exactly one column. Let $c$ be a column of $M$.
First note that we can assume $\|c\|_1 = 1$, i.e., $c\in H$.
Because, we can scale every column of $M$ and every column of
$W$ by some number $t>0$ without destroying or creating solutions.
Then it holds that $c\in X$ and more specifically
there is a vector $\lambda \in [0,1]^k$, with $\|\lambda\|_1 = 1$,
such that $V\cdot \lambda  = c$.
The column of $W$ corresponding to $c$ is $\lambda$.
This specifies the non-negative matrix factorization of
inner dimension $k$.
\end{proof}

\end{document}